\newcommand{\uclose}[1]{\ensuremath{{#1}^\uparrow}}
\newcommand{\dclose}[1]{\ensuremath{{#1}^\downarrow}}
\newcommand{\dclosevec}[1]{\ensuremath{\mathbf{\dclose{#1}}}}
\newcommand{\uclosure}[1]{\ensuremath{{#1}\operatorname{\uparrow}}}
\newcommand{\dclosure}[1]{\ensuremath{{#1}\operatorname{\downarrow}}}
\newcommand{\pre}[1]{\ensuremath{\mathrm{pre}(#1)}}
\newcommand{\post}[1]{\ensuremath{\mathrm{post}(#1)}}
\newcommand{\reach}{\ensuremath{\mathsf{Reach}}}
\newcommand{\cover}{\ensuremath{\mathsf{Cover}}}
\newcommand{\fire}[1]{\ensuremath{| #1 \rangle}}
\newcommand{\setR}{\ensuremath{\dclose R}}
\newcommand{\setRI}{\ensuremath{\dclose {R'}}}
\newcommand{\setP}{\ensuremath{{\dclose P}}}
\newcommand{\setU}{\ensuremath{{\uclose{\mathsf{U}}}}}
\newcommand{\setI}{I}
\newcommand{\vecF}{\ensuremath{\mathbf F}}
\newcommand{\vecR}{\ensuremath{\mathbf R}}
\newcommand{\vecRI}{\ensuremath{\mathbf R'}}
\newcommand{\en}{\ensuremath{\mathbb{N}}}
\newcommand{\NN}{\ensuremath{\mathbb{N}}}
\newcommand{\zed}{\ensuremath{\mathbb{Z}}}
\newcommand{\impl}{\ensuremath{\rightarrow}}
\newcommand{\bfa}{\mathbf{a}}
\newcommand{\bfb}{\mathbf{b}}
\newcommand{\bfc}{\mathbf{c}}
\newcommand{\bfd}{\mathbf{d}}
\newcommand{\bfg}{\mathbf{g}}
\newcommand{\bfi}{\mathbf{i}}
\newcommand{\rulename}[1]{{\ensuremath{\mathrm{[\mathsf{#1}]}}}}
\newcommand{\rInitialize}{\rulename{Initialize}}
\newcommand{\rUnfold}{\rulename{Unfold}}
\newcommand{\rModelSyn}{\rulename{ModelSyn}}
\newcommand{\rModelSem}{\rulename{ModelSem}}
\newcommand{\rValid}{\rulename{Valid}}
\newcommand{\rInduction}{\rulename{Induction}}
\newcommand{\rConflict}{\rulename{Conflict}}
\newcommand{\rCandidateUnres}{\rulename{CandidateNondet}}
\newcommand{\rCandidate}{\rulename{Candidate}}
\newcommand{\rDecideUnres}{\rulename{DecideNondet}}
\newcommand{\rDecide}{\rulename{Decide}}
\newcommand{\resValid}{\textsf{valid}}
\newcommand{\resInvalid}{\textsf{invalid}}
\newcommand{\init}{\textsf{Init}}
\newcommand{\push}[2]{\ensuremath{#1\textsc{.Push}(#2)}}
\newcommand{\popMin}[1]{\ensuremath{#1\textsc{.PopMin}}}
\newcommand{\gen}{\operatorname{Gen}}
\newcommand{\len}{\operatorname{length}}
\newcommand{\goesto}{\mapsto}
\renewcommand{\geq}{\geqslant}
\renewcommand{\leq}{\leqslant}
\renewcommand{\implies}{\Rightarrow}
\newcommand{\wqole}{\preceq}
\newcommand{\wqoge}{\succeq}
\def\setlabelname#1{\edef\curname{#1}\let\@currentlabelname\curname}
\def\set#1{{\{ #1 \}}}
\newtheorem*{theorem-non}{Lemma}
\newtheorem*{prop-non}{Proposition}
\begin{document}

\title{Incremental, Inductive Coverability}
\author{Johannes Kloos \and Rupak Majumdar \and Filip Niksic \and Ruzica Piskac}
\institute{MPI-SWS}
\maketitle

\begin{abstract}
We give an incremental, inductive (IC3) procedure to check coverability
of well-structured transition systems.
Our procedure generalizes the IC3 procedure for safety verification that
has been successfully applied in finite-state hardware verification to infinite-state
well-structured transition systems.
We show that our procedure is sound, complete, and terminating for {\em downward-finite}
well-structured transition systems
---where each state has a finite number of states below it---
a class that contains extensions of Petri nets, broadcast protocols, and lossy channel systems.

We have implemented our algorithm for checking coverability of Petri nets.
We describe how the algorithm can be efficiently implemented without the use of SMT solvers.
Our experiments on standard Petri net benchmarks show that IC3 is competitive with
state-of-the-art implementations for coverability based on symbolic backward analysis
or expand-enlarge-and-check algorithms both in time taken and space usage.
\end{abstract}

\section{Introduction}

The IC3 algorithm \cite{Bradley2011} was recently introduced as an efficient technique
for safety verification of hardware. 
It computes an inductive invariant by maintaining a
sequence of over-approximations of forward-reachable states,
and incrementally strengthening them based on counterexamples to inductiveness.
The counterexamples are obtained using a backward exploration from error states. 
Efficient implementations of the procedure show remarkably good performance
on hardware benchmarks \cite{Een2011}.

A natural direction is to extend the IC3 algorithm to classes of systems beyond finite-state
hardware circuits. 
Indeed, an IC3-like technique was recently
proposed for interpolation-based software verification \cite{Cimatti}, and
the technique was generalized to finite-data pushdown systems as well as systems 
using linear real arithmetic \cite{Hoder2012}.
% In general, the safety verification problem for infinite-state systems,
% such as systems defined using linear arithmetic, is undecidable.
Hoder and Bj{\o}rner show that their generalized IC3 procedure
terminates on timed pushdown automata \cite{Hoder2012}, and it is natural
to ask for what other classes of infinite-state systems does IC3 form a decision procedure
for safety verification. 

In this paper, we consider well-structured transition systems (WSTS) \cite{ACJT96,FS01}.
WSTS are infinite-state transition systems whose states have a well-quasi order, and whose
transitions satisfy a monotonicity property w.r.t.\ the quasi-order.
WSTS capture many important infinite-state models such as Petri nets and their monotonic extensions \cite{EsparzaNielsen94,Cia94,DFS98,EEC},
broadcast protocols \cite{EN98,EFM99}, and lossy channel systems \cite{ABJ98}.
A general decidability result shows that the 
coverability problem (reachability in an upward-closed set) 
is decidable for WSTS \cite{ACJT96}.
The decidability result performs a backward reachability analysis, and shows, using
properties of well-quasi orderings, that the reachability procedure must terminate.
In many verification problems, techniques based on computing inductive invariants outperform
methods based on backward or forward reachability analysis; indeed, IC3 for hardware circuits
is a prime example.
Thus, it is natural to ask if there is a IC3-style decision procedure for coverability analysis for 
WSTS.

We answer this question positively.
We give a generalization of IC3 for WSTS, and show that it
terminates on the class of {\em downward-finite} WSTS, in which each state has a finite
number of states lower than itself.
The class of downward-finite WSTS contains the most important classes of WSTS
used in verification, including 
Petri nets and their extensions, broadcast protocols, and
lossy channel systems. 
Hence, our results show that IC3 is a decision procedure for the coverability problem for
these classes of systems.
While termination is trivial in the finite-state case, our technical contribution is to show,
using the termination of the backward reachability procedure, that the sequence of (downward closed)
invariants produced by IC3 is guaranteed to converge.
We also show that the assumption of downward-finiteness is necessary: we give a (technical) example of a general WSTS
on which the algorithm does not terminate.

We have implemented our algorithm in a tool called IIC to check coverability in Petri nets.
Using combinatorial properties of Petri nets, we derive
an optimized implementation of the algorithm that does not
use an SMT solver.
Our implementation shows that IIC outperforms several state-of-the-art implementations of
coverability \cite{EEC,wahlkroening} on a set of Petri net examples, 
both in space and in time requirements.
For example, on a set of standard Petri net examples, we outperform implementations of EEC
and backward reachability, often by orders of magnitude.

\section{Preliminaries}\label{sec:prelim}

\noindent\emph{Well-quasi Orders}
For a set $X$, a relation $\wqole \subseteq X\times X$ is a
\emph{well-quasi-order (wqo)} if it is reflexive, transitive, and if
for every infinite sequence $x_0, x_1, \ldots$ of elements from $X$,
there exists $i<j$ such that $x_i\wqole x_j$.
A set $Y \subseteq X$ is \emph{upward-closed} if for every $y\in Y$
and $x\in X$, $y \wqole x$ implies $x \in Y$.  Similarly, a
set $Y\subseteq X$ is \emph{downward-closed} if for every $y\in Y$ and
$x\in X$, $x\wqole y$ implies $x\in Y$.  For a set $Y$, by
$\uclosure{Y}$ we denote its upward closure, i.e., the set $\{ x \mid
\exists y \in Y, y \wqole x \}$.  For a singleton $\set{x}$, we simply
write $\uclosure{x}$ for $\uclosure{\set{x}}$.  Similarly, we define
$\dclosure{Y} = \{ x \mid \exists y \in Y, x \wqole y \}$ for the
downward closure of a set $Y$.  Clearly, $\uclosure{Y}$ (resp.,
$\dclosure{Y}$) is an upward-closed set (resp.\ downward-closed) for
each $Y$.  The union and intersection of upward-closed sets are
upward-closed, and the union and intersection of downward-closed sets
are downward-closed. Furthermore, the complement of an upward-closed
set is downward-closed, and the complement of a downward-closed set is
upward-closed.  For the convenience of the reader, we will mark
upward-closed sets with a small up-arrow superscript, like this:
$\uclose{U}$, and downward-closed sets with a small down-arrow
superscript, like this: $\dclose{D}$.

A basis of an upward-closed set $Y$ is a set $Y_b\subseteq Y$ such
that $Y = \bigcup_{y \in Y_b} \ \uclosure{y}$.  It is known
\cite{Higman1952,ACJT96,FS01}
that any upward-closed set $Y$ in a wqo has a finite basis: the set of minimal
elements of $Y$ has finitely many equivalence classes under the
equivalence relation $\wqole \cap \wqoge$, so take any system of
representatives.  We write $\min Y$ for such a system of representatives.
Moreover, it is known that any non-decreasing
sequence $I_0\subseteq I_1\subseteq \ldots$ of upward-closed sets
eventually stabilizes, i.e., there exists $k\in \NN$ such that $I_k =
I_{k+1} = I_{k+2} = \ldots$.

A wqo $(X,\wqole)$ is \emph{downward-finite} if for each $x\in X$, the
downward closure $\dclosure{x}$ is a finite set.

\paragraph{Examples:}
Let $\NN^k$ be the set of $k$-tuples of natural numbers, and let
$\wqole$ be pointwise comparison: $v\wqole v'$ if $v_i \leq v'_i$ for $i =
1,\ldots, k$. Then, $(\NN^k, \wqole)$ is a downward-finite
wqo \cite{Dickson1913}.
%  and
% it is downward-finite, since there are $\prod_{i=1}^k (v_i+1)$
% elements below a vector $v$.

Let $A$ be a finite alphabet, and consider the subword ordering
$\preceq$ on words over $A$, given by $w \preceq w'$ for $w,w'\in
A^{*}$ if $w$ results from $w'$ by deleting some occurrences of
symbols.  Then $(A^{*}, \preceq)$ is a downward-finite wqo \cite{Higman1952}.
% and
% it is downward-finite as a fixed word has only a finite number of
% subwords.

% Also, the product of downward-finite wqos is a wqo, and the subword
% ordering on a downward-finite well-quasi ordered alphabet is a downward-finite wqo.

\smallskip
\noindent
\emph{Well-structured Transition Systems}
A well-structured transition system (WSTS) $(\Sigma, I, \to, \wqole)$
consists of a set $\Sigma$ of states, a finite set $I\subseteq \Sigma$
of initial states, a transition relation $\to \subseteq \Sigma\times
\Sigma$, and a well-quasi ordering $\wqole\subseteq \Sigma\times
\Sigma$ such that for all $s_1, s_2, t_1 \in \Sigma$ such that $s_1
\to s_2$ and $s_1 \wqole t_1$ there exists $t_2$ such that $t_1
{\to}^{*} t_2$ and $s_2 \wqole t_2$.  A WSTS is downward-finite if
$(\Sigma,\wqole)$ is downward-finite.

Let $x,y \in \Sigma$. If $x \to y$, we call $x$ a \emph{predecessor}
of $y$, and $y$ a \emph{successor} of $x$.  We write $\pre{x} := \{ y
\mid y \to x \}$ for the \emph{set of predecessors} of $x$, and
$\post{x} := \{ y \mid x \to y \}$ for the \emph{set of successors} of
$x$.  For $X \subseteq \Sigma$, $\pre{X}$ and $\post{X}$ are defined
as natural extensions, i.e., $\pre{X} = \bigcup_{x \in X} \pre{x}$ and
$\post{X} = \bigcup_{x \in X} \post{x}$.

We write $x \to^k y$ if there are states $x_0, \ldots, x_k \in \Sigma$
such that $x_0 = x$, $x_k = y$ and $x_i \to x_{i+1}$ for $0 \le i <
k$.  Furthermore, $x \to^{*} y$ iff there exists a $k\geq 0$ such that
$x \to^k y$, i.e., $\to^{*}$ is the reflexive and transitive closure
of $\to$.  We say that there is a \emph{path from $x$ to $y$ of length
  $k$} if $x \to^k y$, and that there is a path from $x$ to $y$ if $x
\to^{*} y$.

The set of \emph{$k$-reachable} states $\reach_k$ is the set of states
reachable in at most $k$ steps, formally, 
$\reach_k := \{ y \in \Sigma \mid \exists k' \leq k,\exists x \in I, x \to^{k'} y \}$.
The set of \emph{reachable} states 
$\reach := \bigcup_{k \ge 0} \reach_k = \{ y \mid \exists x \in I, x \to^{*} y \}$.  
Using downward closure, we can define the \emph{$k$-th cover} $\cover_k$ and
the \emph{cover} $\cover$ of the WSTS as
$\cover_k := \dclosure{\reach_k}$ and $\cover := \dclosure{\reach}$.
The \emph{coverability problem for WSTS} asks, given a WSTS
$(\Sigma,I, \to,\wqole)$ and a downward-closed set $\setP$, if every
reachable state is contained in $\setP$, i.e., if $\reach \subseteq
\setP$.  It is easy to see that this question is equivalent to
checking if $\cover \subseteq \setP$.
% So if $\cover$ were computable, this would provide a solution to the problem.

In the following, we make some standard effectiveness assumptions on
WSTS \cite{ACJT96,FS01}.  We assume
that $\wqole$ is decidable, and that for any state $x\in\Sigma$, there
is a computable procedure that returns a finite basis for
$\pre{\uclosure{x}}$.  These assumptions are met by most classes of
WSTS considered in verification \cite{FS01}.

Under the preceding effectiveness assumptions, one can show that the
coverability problem is decidable for WSTS by a backward-search algorithm
\cite{ACJT96}.  The main construction is the
following sequence of upward-closed sets:
\begin{align} \label{abdulla-seq} \setU_0 &:= \Sigma \setminus \setP\,, &
  \setU_{i+1} & := \setU_i \cup \pre{\setU_i}\,.
\tag{BackwardReach}
\end{align}
It is easy to see that the sequence of sets $\setU_i$ forms an
increasing chain of upward-closed sets, therefore it
eventually stabilizes: there is some $L$ such that $\setU_L =
\setU_{L+i}$ for all $i \ge 0$.  Then, $\cover \subseteq \setP$ iff $I
\cap \setU_L = \emptyset$.  Moreover, if $I \cap \setU_L = \emptyset$,
then $\Sigma \setminus \setU_L$ contains $I$, is contained in $\setP$
and satisfies $\post{\Sigma\setminus\setU_L}\subseteq\Sigma\setminus\setU_L$.

% In general, $\cover$ need not be a computable set, even when the computability assumptions on WSTS hold.
%
We generalize from $\Sigma\setminus \setU_L$, in the style of
inductive invariants for safety verification, to the notion of an
(inductive) \emph{covering set}.  A downward-closed set $\dclose{C}$ is
called a \emph{covering set} for $\setP$ iff (a) $I\subseteq\dclose{C}$, (b) $\dclose{C}
\subseteq \setP$, and (c) if $\post{\dclose{C}} \subseteq \dclose{C}$.  By
induction, it is clear that $\cover \subseteq \dclose{C} \subseteq
\setP$ for any covering set $\dclose{C}$.  
Therefore, to solve the coverability
problem, it is sufficient to exhibit any covering set.
%  such that $\dclose{C} \subseteq \setP$,
% or demonstrate that such a set cannot exist. In fact, to demonstrate
% that no such covering set can exist, it is sufficient to show that some
% state $x \in \reach$ such that $x \not\in \setP$, i.e., there is a path
% from some state in $I$ to some state not in $\setP$.

\section{IC3 for Coverability}\label{sec:alg}

We now describe an algorithm for the coverability problem that takes
as input a WSTS $(\Sigma, I, \to, \wqole)$ and a downward-closed set
$\setP$, and constructs
either a path from some state in $I$ to a state not in $\setP$ 
(if $\cover \not\subseteq \setP$), or an inductive covering set for $\setP$.
%
%\begin{enumerate}
%	\item A path from an initial state $x \in \setI$ to a state
%	    $y \not\in \setP$, in case $\reach \not \subseteq \setP$; or
%	\item \FN{The inductive covering and the fact it contains $\cover$
%	    is discussed in the last paragraph of preliminaries. Is it necessary
%	    to repeat all that? Maybe put (a)-(c) there explicitly.}
%	   
%	    An inductive covering set $\setR$ such that:\\
%		(a) $\setI \subseteq \setR$,\quad\quad
%		(b) $\post{\setR} \subseteq \setR$,\quad\quad
%		(c) $\setR \subseteq \setP$.\\
%	    By induction, one finds that this implies
%	    $\cover \subseteq \setR{}$.
%    \end{enumerate}
%
    In the algorithm we consider sets that are not necessarily
    inductive by themselves, but they are \emph{inductive relative to}
    some other sets. Formally, for a set $\setR$ such that $\setI
    \subseteq \setR$, a downward-closed set $\dclose{S}$ is inductive relative to
    $\setR$ if $I \subseteq \dclose{S}$ and $\post{\setR \cap \dclose{S}} \subseteq \dclose{S}$.
    An upward-closed set $\uclose{U}$ is inductive relative to $\setR$ if its downward-closed
    complement $\Sigma\setminus\uclose{U}$ is inductive relative to $\setR$, i.e.
    if $I\cap\uclose{U}=\emptyset$ and
    $\post{\setR\setminus\uclose{U}}\subseteq\Sigma\setminus\uclose{U}$.
    
It can be easily shown that the condition $\post{\setR \cap
  \dclose{S}} \subseteq \dclose{S}$ is equivalent to $\pre{\Sigma
  \setminus \dclose{S}} \cap \setR \cap \dclose{S} = \emptyset$.
  Stated in terms of an upward-closed set $\uclose{U}$, the equivalent
  condition is $\pre{\uclose{U}}\cap\setR\setminus\uclose{U}=\emptyset$.
  Since all these conditions are equivalent, we will use them interchangeably.

% In the following, we will describe the algorithm by giving a set
% of state transition rules, similar to \cite{Hoder2012}.  
% In a first step, we will provide a set of
% non-deterministic rules that we will prove sound for any WSTS: whenever the
% algorithm terminates and makes a statement about coverability,
% this statement reflects the properties of the WSTS.  
% In a second step, we resolve some non-deterministic choices and show the 
% algorithm always terminates for downward-finite WSTS.

\subsection{Algorithm}

    \begin{figure}[t]
	\begin{align*}
	    &\inference[\rInitialize]{ }
	      {\init \goesto \dclosure{\setI} \mid  \varnothing}
	    &&\inference[\rCandidateUnres]{a \in \setR_N \setminus \setP}
			 {\vecR\mid \varnothing \goesto \vecR\mid \langle a, N \rangle}\\
	    &\inference[\rModelSyn]{\min Q = \langle a,0\rangle}
		  {\vecR\mid Q \goesto \resInvalid}
	    &&\inference[\rModelSem]{\min Q = \langle a,i \rangle
		  \quad I \cap \uclosure{a} \neq \varnothing}
		  {\vecR\mid Q \goesto \resInvalid}\\
	      &\mathrlap{\inference[\rDecideUnres]
			 {\min Q = \langle a, i\rangle \quad i > 0
			     \quad b \in \pre{\uclosure{a}} \cap \setR_{i-1} \setminus \uclosure{a}}
			 {\vecR\mid Q \goesto \vecR\mid \push{Q}{\langle b,i-1\rangle}}}\\
	    &\mathrlap{\inference[\rConflict]
			{\min Q = \langle a, i \rangle
			    \quad i > 0
			    \quad \pre{\uclosure a} \cap \setR_{i-1} \setminus \uclosure{a}
			    	= \varnothing
			    \quad b \in \gen_{i-1}(a)}
			{\vecR\mid Q \goesto
			 \vecR[\setR_k \gets \setR_k \setminus
			 \uclosure{b}]_{k=1}^i\mid \popMin{Q}}}\\
	    &\mathrlap{\inference[\rInduction]
		  {\setR_i = \Sigma \setminus
		      \uclosure{ \{ r_{i,1}, \ldots, r_{i,m} \}}
		  \quad b \in \gen_i(r_{i,j}) \text{ for some } 1 \le j \le m}
		  {\vecR\mid \varnothing \goesto
		      {\vecR[\setR_k \gets \setR_k \setminus
		  \uclosure{b}]_{k=1}^{i+1}\mid \varnothing}}}\\
	    &\inference[\rValid]{\setR_i = \setR_{i+1}
		  \text{ for some } i < N}
		  {\vecR\mid Q \goesto \resValid}
	    &&\inference[\rUnfold]{\setR_N \subseteq \setP}
	      {\vecR\mid \varnothing \goesto \vecR \cdot \Sigma\mid \varnothing}
	\end{align*}
	\caption{The rule system for a IC3-style algorithm for WSTS --
	    generic version. The map $\gen_i$ is defined in
	    equation \eqref{eq:gen}.
%Compare \cite{Hoder2012}, Figure 2.
	\label{fig:rules-generic}
}
    \end{figure}

    Figure~\ref{fig:rules-generic} shows the algorithm as a set of
    non-deterministic state transition rules, similar to \cite{Hoder2012}.  
    A state of the computation is either the
    initial state $\init$, the special states $\resValid{}$ and
    $\resInvalid{}$ that denote termination, or a pair $\vecR \mid Q$
    defined as follows.

    The first component of the pair is a vector $\vecR$ of
    downward-closed sets, indexed starting from 0.  The elements of
    $\vecR$ are denoted $\setR_i$.  In particular, we denote by
    $\setR_0$ the downward closure of $\setI$, i.e., $\setR_0 =
    \dclosure{\setI}$.  These sets contain the successive
    approximations to the inductive covering set.  The function
    $\operatorname{length}$ gives the length of the vector,
    disregarding $\setR_0$, i.e., $\operatorname{length}(\setR_0,
    \ldots, \setR_N) = N$.  If it is clear from the context which
    vector is meant, we often abbreviate
    $\operatorname{length}(\vecR)$ simply with $N$.  We write $\vecR
    \cdot X$ for the concatenation of the vector $\vecR$ with the
    downward closed set $X$: $(\setR_0, \ldots, \setR_N) \cdot X =
    (\setR_0, \ldots, \setR_N, X)$.

    The second component of the pair is a priority queue $Q$,
    containing elements of the form $\langle a, i\rangle$, where $a
    \in \Sigma$ is a state and $i\in \NN$ is a natural number.  The
    priority of the element is given by $i$, and is called the level
    of the element.  The statement $\langle a, i \rangle \in Q$ means
    that the priority queue contains an element of the given form,
    while $\min Q = \langle a, i\rangle$ means that the minimal
    element of the priority queue has the given form. Furthermore,
    $\popMin{Q}$ yields $Q$ after removal of its minimal element, and
    $\push{Q}{x}$ yields $Q$ after adding element $x$.

    The elements of $Q$ are states that lead outside of $\setP$.  Let
    $\langle a, i\rangle$ be an element of $Q$. Either $a$ is a state
    that is in $R_i$ and outside of $\setP$, or there is a state
    $b$ leading to $\setP^c$ such that $a \in
    \pre{\uclosure{b}}$. Our goal is to try to discard those states
    and show that they are not reachable from the initial state, as
    $R_i$ denotes an overapproximation of the states reachable in $i$
    or less steps. If an element of $Q$ is reachable from the initial
    state, then $\cover \not \subseteq \setP$.

    The state \resValid{} signifies that the search has terminated
    with the result that $\cover \subseteq \setP$ holds, while
    \resInvalid{} signifies that the algorithm has terminated with the
    result that $\cover \not\subseteq \setP$.  In the description of
    the algorithm, we will omit the actual construction of
    certificates and instead just state that the algorithm terminates
    with $\resInvalid$ or $\resValid$; the calculation of certificates is
    straightforward.

    The transition rules of the algorithm are of the form
\begin{equation}\label{rule}
	\inference[{[Name]}]{C_1 \quad \cdots C_k}
	{\sigma \goesto \sigma'}
\tag{Rule}
\end{equation}
and can be read thus: whenever the algorithm is in state $\sigma$ and
conditions $C_1$--$C_k$ are fulfilled, the algorithm can apply rule
[Name] and transition to state $\sigma'$.  We write $\sigma \goesto
\sigma'$ if there is some rule such that the algorithm applies the
rule to go from $\sigma$ to $\sigma'$.  We write $\goesto^{*}$ for the
reflexive transitive closure of $\goesto$.

Let $\mathsf{Inv}$ be a predicate on states.  When we say that a rule
\emph{preserves the invariant} $\mathsf{Inv}$ if whenever
$\sigma$ satisfies $\mathsf{Inv}$ and conditions $C_1$ to $C_k$ are
met, it also holds that $\sigma'$  satisfies $\mathsf{Inv}$.

Two of the rules use the map $\operatorname{Gen}_i: \Sigma \to
2^{\Sigma}$. It yields those states that are valid
generalizations of $a$ relative to some set $\setR_i$.  A state $b$ is
a generalization of the state $a$ relative to the set $\setR_i$,
if $b \wqole a$ and $\uclosure{b}$ is inductive relative to
$\setR_i$. Formally,
\begin{equation}
	\label{eq:gen}
	\operatorname{Gen}_i(a) := \{ b \mid b \wqole a \land
	    	\uclosure{b} \cap \setI = \varnothing \land
                 \pre{\uclosure{b}} \cap \setR_{i} \setminus \uclosure{b} = \varnothing \}
\end{equation}
Finally, the notation $\vecR[\setR_k \gets \setRI_k]_{k=1}^i$ means
that $\vecR$ is transformed by replacing 
$\setR_k$ by $\setRI_k$ for each $k = 1, \ldots, i$, i.e.,
\[ \vecR[\setR_k \gets \setRI_k]_{k=1}^i
	= (\setR_0, \setRI_1, \ldots, \setRI_i,	\setR_{i+1}, \ldots, \setR_n).
\]
We provide an overview of each rule of the calculus.
%\begin{itemize}
%\item 
$\rInitialize$ The algorithm starts by defining the first
  downward-closed set $\setR_0$ to be the downward closure of the initial
  state.
\\
%\item 
$\rCandidateUnres$ If there is a state $a$ such that $a \in \setR_N$
  but at the same time it is not an element of $\setP$ we add $\langle
  a, N\rangle$ to the priority queue $Q$.
\\
%\item 
$\rDecideUnres$ To check if the elements of $Q$ are spurious
  counterexamples, we start by processing an element $a$ with the
  lowest level $i$. If there is an element $b$ in $\setR_{i-1}$ such that
  $b \in \pre{\uclosure{a}}$, then we add $\langle b, i-1\rangle$ to
  the priority queue $Q$.
\\
%\item 
$\rModelSyn$ If the queue contains a state $a$ from the level 0,
  then we have found a counterexample trace and the algorithm
  terminates in the state $\resInvalid$.
\\
%\item 
$\rModelSem$ Similarly, if the queue contains a state $a$ such
  $I \cap \uclosure{a} \neq \emptyset$, this is again a counterexample
  trace and the algorithm terminates in the state $\resInvalid$.
\\
%\item 
$\rConflict$ If none of predecessors of a state $a$ from the
  level $i$ is contained in $\setR_{i-1} \setminus \uclosure{a}$, then
  $a$ belongs to a spurious counterexample trace and can be safely
  removed from the queue. Additionally, we update the downward-closed
  sets $\setR_1, \ldots, \setR_i$ as follows: since the states in
  $\uclosure{a}$ are not reachable in $i$ steps, they can be safely
  removed from all the sets $\setR_1, \ldots, \setR_i$.  Moreover,
  instead of $\uclosure{a}$ we can remove even a bigger set
  $\uclosure{b}$, for any state $b$ which is a generalization of the
  state $a$ relative to $\setR_{i-1}$, as defined in \eqref{eq:gen}.
\\
%\item 
$\rInduction$ If for some state $\uclosure{r_{i,j}}$ that was
  previously removed from $\setR_i$, a set
  $\Sigma\setminus\uclosure{r_{i,j}}$ becomes inductive relative to
  $\setR_i$ (i.e.  $\post{\setR_i \cap \dclosure{r_{i,j}}} \subseteq
  \dclosure{r_{i,j}}$), none of the states in $\uclosure{r_{i,j}}$ can
  be reached in at most $i+1$ steps. Thus, we can safely remove
  $\uclosure{r_{i,j}}$ from $\setR_{i+1}$ as well.  Similarly as in
  $\rConflict$, we can even remove $\uclose{b}$ for any generalization
  $b\in\operatorname{Gen}_i(r_{i,j})$.
\\
%\item 
$\rValid$ If there is a downward-closed set $\setR_i$ such that
  $\setR_i = \setR_{i+1}$, the algorithm terminates in the state
  $\resValid$.
\\
%\item 
$\rUnfold$ If the queue is empty and all elements of $\setR_N$
  are in $\setP$, we start with a construction of the next set
  $\setR_{N+1}$. Initially, $\setR_{N+1}$ contains all the states,
  $\setR_{N+1} = \Sigma$, and we append $\setR_{N+1}$ to the vector
  $\vecR$.
%\end{itemize}

%

\subsection{Soundness}

We first show that the algorithm is sound: if it terminates, it
produces the right answer.  If it terminates in the state
\resInvalid{} there is a path from an initial state to a state outside
of $\setP$, and if it terminates in the state \resValid{} then 
$\cover \subseteq \setP$.

We prove soundness by showing that on each state $\vecR\mid Q$ the
following invariants are preserved by the transition rules:
    \begin{align}
	& \setI \subseteq \setR_i &&\text{for all } 0 \le i \le N
	  \label{eq:I1}\tag{I1}\\
        & \post{\setR_i} \subseteq \setR_{i+1} &&\text{for all } 0 \le i < N
	  \label{eq:I2}\tag{I2}\\
	& \setR_i \subseteq \setR_{i+1} &&\text{for all } 0 \le i < N
	  \label{eq:I3}\tag{I3}\\
	& \setR_i \subseteq \setP &&\text{for all } 0 \le i < N
	  \label{eq:I4}\tag{I4}
    \end{align}
    These properties imply $\setR_i \supseteq \cover_i$, that is, the
    region $R_i$ provides an over-approximation of the $i$-cover.

    The first step of the algorithm (rule \rInitialize) results with
    the state $\dclosure{\setI}\mid \varnothing$, which satisfies
    \eqref{eq:I2}--\eqref{eq:I4} trivially, and $\setI \subseteq
    \dclosure{\setI}$ establishes \eqref{eq:I1}.  The following lemma
    states that the invariants are preserved by rules that do not
    result in \resValid{} or \resInvalid.  
For lack of space, full proofs are given in Appendix~\ref{ap:proofs}.

    \begin{lemma}\label{lemma:rulesI14invs}
      The rules $\rUnfold$, $\rInduction$, $\rConflict$, $\rCandidateUnres$,
      and $\rDecideUnres$ preserve \eqref{eq:I1} -- \eqref{eq:I4},
\end{lemma}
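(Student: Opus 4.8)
The plan is to check each of the five rules separately against each of the four invariants, relying on the fact that before the rule is applied the state $\vecR \mid Q$ already satisfies \eqref{eq:I1}--\eqref{eq:I4}. The priority-queue component $Q$ plays no role in \eqref{eq:I1}--\eqref{eq:I4}, so $\rCandidateUnres$ and $\rDecideUnres$ are immediate: they change only $Q$, leaving $\vecR$ untouched, hence all four invariants persist verbatim. For $\rUnfold$, the vector is extended by one new component $\setR_{N+1} = \Sigma$; I would observe that the new length is $N+1$, so the invariants must now be checked for indices up to $N+1$. Invariant \eqref{eq:I1} for the new index is $\setI \subseteq \Sigma$, trivially true; \eqref{eq:I2} for $i = N$ becomes $\post{\setR_N} \subseteq \Sigma$, trivial; \eqref{eq:I3} for $i = N$ becomes $\setR_N \subseteq \Sigma$, trivial; and \eqref{eq:I4} for $i = N$ (which is now in range, $N < N+1$) is exactly the side condition $\setR_N \subseteq \setP$ of $\rUnfold$. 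All indices below $N$ are unchanged.

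The substantive work is in $\rConflict$ and $\rInduction$, both of which shrink the sets $\setR_1,\dots,\setR_i$ (resp.\ $\setR_1,\dots,\setR_{i+1}$) by removing an upward-closed set $\uclosure{b}$, where $b \in \gen_{i-1}(a)$ (resp.\ $b \in \gen_i(r_{i,j})$). The key fact I would extract first is what membership of $b$ in $\gen_j(\cdot)$ buys us: by definition \eqref{eq:gen}, $\uclosure{b} \cap \setI = \varnothing$ and $\pre{\uclosure{b}} \cap \setR_j \setminus \uclosure{b} = \varnothing$, which (by the equivalence noted just before the algorithm) says precisely that $\uclose{b}$ is inductive relative to $\setR_j$, i.e.\ $\post{\setR_j \setminus \uclosure{b}} \subseteq \Sigma \setminus \uclosure{b}$. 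Now for each invariant: \eqref{eq:I1} survives because removing $\uclosure{b}$ from any $\setR_k$ cannot remove an initial state, since $\uclosure{b} \cap \setI = \varnothing$ and $\setI \subseteq \setR_k$ held before. \eqref{eq:I3} (monotonicity) survives because every $\setR_k$ for $k$ in the modified range loses exactly the same set $\uclosure{b}$, and the boundary case $\setR_i \setminus \uclosure{b} \subseteq \setR_{i+1}$ (for $\rConflict$, where $\setR_{i+1}$ is not modified) follows from $\setR_i \subseteq \setR_{i+1}$; for $\rInduction$ the analogous boundary is handled because the modification extends to $\setR_{i+1}$ as well. \eqref{eq:I4} ($\setR_k \subseteq \setP$) is preserved trivially since we only remove states.

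The delicate invariant is \eqref{eq:I2}, $\post{\setR_k} \subseteq \setR_{k+1}$, and I expect this to be the main obstacle. For an index $k$ strictly inside the modified range, both $\setR_k$ and $\setR_{k+1}$ shrink by $\uclosure{b}$, and I need $\post{(\setR_k \setminus \uclosure{b})} \subseteq \setR_{k+1} \setminus \uclosure{b}$: containment in $\setR_{k+1}$ is clear from the old invariant since $\setR_k \setminus \uclosure{b} \subseteq \setR_k$, and disjointness from $\uclosure{b}$ follows from the relative-inductiveness of $\uclose{b}$ w.r.t.\ the appropriate $\setR_j$ — here one uses that $b$ is a generalization relative to $\setR_{i-1}$ (resp.\ $\setR_i$) together with monotonicity \eqref{eq:I3} to get $\setR_k \setminus \uclosure{b} \subseteq \setR_{i-1} \setminus \uclosure{b}$ for $k \le i-1$, so $\post{\setR_k \setminus \uclosure{b}} \subseteq \post{\setR_{i-1} \setminus \uclosure{b}} \subseteq \Sigma \setminus \uclosure{b}$. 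The only remaining case is the top boundary of the modified range — index $k = i$ in $\rConflict$ (where $\setR_{i+1}$ is unmodified) and $k = i+1$ in $\rInduction$ (where $\setR_{i+2}$ is unmodified): here $\setR_k$ shrinks but $\setR_{k+1}$ does not, so $\post{(\setR_k \setminus \uclosure{b})} \subseteq \post{\setR_k} \subseteq \setR_{k+1}$ is immediate and nothing extra is needed. Assembling these case checks for both rules, with the relative-inductiveness fact doing all the real work for \eqref{eq:I2}, completes the proof.
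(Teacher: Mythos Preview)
Your plan matches the paper's approach closely: dispose of $\rCandidateUnres$ and $\rDecideUnres$ trivially (they leave $\vecR$ unchanged), handle $\rUnfold$ by checking only the new top index (with the rule's side condition supplying \eqref{eq:I4} at $i=N$), and for $\rConflict$/$\rInduction$ use that $b \in \gen$ gives both $I \cap \uclosure{b} = \varnothing$ and relative inductiveness, then push the latter down to every modified level via \eqref{eq:I3}. The paper packages the $\rConflict$/$\rInduction$ argument as a single claim covering both rules at once, but the content is the same as yours.

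There is, however, one boundary case you overlook. For $\rConflict$ and $\rInduction$ you treat the interior of the modified range and the top boundary (where $\setR_{i+1}$, resp.\ $\setR_{i+2}$, is unmodified), and then declare that to be ``the only remaining case.'' It is not: the \emph{bottom} boundary $k = 0$ also needs care, since $\setR_0$ is not in the modified range while $\setR_1$ is. For \eqref{eq:I3} you must show $\setR_0 \subseteq \setR_1 \setminus \uclosure{b}$, and for \eqref{eq:I2} you must show $\post{\setR_0} \subseteq \setR_1 \setminus \uclosure{b}$; neither follows from the old invariants alone, because the target set has shrunk while the source has not. The fix is immediate and uses the fact you already invoked for \eqref{eq:I1}: from $I \cap \uclosure{b} = \varnothing$ and downward-closedness of $\Sigma \setminus \uclosure{b}$ one gets $\setR_0 = \dclosure{I} \subseteq \Sigma \setminus \uclosure{b}$, hence $\setR_0 \setminus \uclosure{b} = \setR_0$, and your interior argument then applies verbatim at $k=0$. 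The paper singles this case out explicitly (and in fact proves it by a more elaborate detour through well-structuredness), so it is worth making it visible in your write-up.
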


By induction on the length of the trace, it can be shown that if
$\init \goesto^{*} \vecR\mid Q$, then $\vecR \mid  Q$ satisfies \eqref{eq:I1}
-- \eqref{eq:I4}.  When $\init \goesto^{*} \resValid$, there is a
state $\vecR\mid Q$ such that $\init \goesto^{+} \vecR\mid Q \goesto
\resValid$, and the last applied rule is \rValid.  To be able to apply
\rValid, there is an $i$ such that $\setR_i = \setR_{i+1}$.

We claim that $\setR_i$ is an inductive covering set.  This claim
follows from the fact that (1) $\setR_i \subseteq \setP$ by invariant
\eqref{eq:I4}, (2) $\setI \subseteq \setR_i$ by invariant
\eqref{eq:I1}, and (3) $\post{\setR_i} \subseteq \setR_{i+1} =
\setR_i$ by invariant \eqref{eq:I2}.  This claim proves the
correctness of the algorithm in case $\cover \subseteq \setP$:

\begin{theorem}{[Soundness of coverability]}\label{tm:soundCover}
Given a WSTS $(\Sigma, I, \to, \wqole)$ and a downward-closed set
$\setP$, if $\init \goesto^{*} \resValid$, then $\cover \subseteq \setP$.
\end{theorem}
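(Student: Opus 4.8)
The plan is to argue directly from the invariants established in Lemma~\ref{lemma:rulesI14invs}, essentially assembling the pieces that the text has already laid out into a clean deduction.

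First I would observe that by rule \rInitialize{} the computation starts in the state $\dclosure{\setI}\mid\varnothing$, which satisfies \eqref{eq:I1}--\eqref{eq:I4}: the last three hold vacuously since there is no index $i$ with $0 \le i < N = 0$, and \eqref{eq:I1} holds because $\setI \subseteq \dclosure{\setI}$. Then, by Lemma~\ref{lemma:rulesI14invs}, every rule other than \rValid{} and \rModelSyn{}/\rModelSem{} preserves \eqref{eq:I1}--\eqref{eq:I4}. A straightforward induction on the length of the derivation shows that whenever $\init \goesto^{*} \vecR\mid Q$ and $\vecR\mid Q$ is a genuine pair (not $\resValid$ or $\resInvalid$), the invariants \eqref{eq:I1}--\eqref{eq:I4} hold of $\vecR\mid Q$. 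The base case is the state produced by \rInitialize; the inductive step uses that the only rules producing a pair-state from a pair-state are exactly those covered by the lemma.

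Next I would analyze the shape of a derivation ending in $\resValid$. Since $\init$ itself is not $\resValid$, there is a last step, so $\init \goesto^{*} \vecR\mid Q \goesto \resValid$ for some pair $\vecR\mid Q$, and the rule applied in the final step must be \rValid{} (it is the only rule whose target is $\resValid$). The side condition of \rValid{} gives an index $i < N$ with $\setR_i = \setR_{i+1}$. By the induction just described, $\vecR\mid Q$ satisfies \eqref{eq:I1}--\eqref{eq:I4}. I then claim $\setR_i$ is a covering set for $\setP$ in the sense defined in Section~\ref{sec:alg}: property (a) $\setI \subseteq \setR_i$ is \eqref{eq:I1}; property (b) $\setR_i \subseteq \setP$ is \eqref{eq:I4} (applicable since $i < N$); and property (c) $\post{\setR_i} \subseteq \setR_i$ follows from \eqref{eq:I2}, which gives $\post{\setR_i} \subseteq \setR_{i+1}$, combined with $\setR_{i+1} = \setR_i$. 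Finally, since $\setR_i$ is downward-closed (every $\setR_k$ is, by construction), the remark in Section~\ref{sec:alg} that any covering set $\dclose{C}$ satisfies $\cover \subseteq \dclose{C} \subseteq \setP$ yields $\cover \subseteq \setP$, as required.

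I do not expect a serious obstacle here, since Lemma~\ref{lemma:rulesI14invs} does the real work and the remainder is bookkeeping. The one point that needs a little care is the induction over derivations: one must check that no rule other than those in Lemma~\ref{lemma:rulesI14invs} can appear as an intermediate step leading to a pair-state --- in particular that \rValid, \rModelSyn, and \rModelSem{} only ever occur as the final step --- so that the inductive hypothesis always applies. This is immediate from inspecting the targets of the rules in Figure~\ref{fig:rules-generic}, but it is the place where the argument could go wrong if stated carelessly.
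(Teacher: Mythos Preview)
Your proposal is correct and follows essentially the same argument as the paper: establish the invariants \eqref{eq:I1}--\eqref{eq:I4} by induction along the derivation using Lemma~\ref{lemma:rulesI14invs}, note that the last step into $\resValid$ must be \rValid{} giving some $i<N$ with $\setR_i=\setR_{i+1}$, and then read off that $\setR_i$ is a covering set from \eqref{eq:I1}, \eqref{eq:I2}, and \eqref{eq:I4}. Your extra care in justifying the induction (that only invariant-preserving rules can produce intermediate pair-states) is a welcome explicitness, but the approach is the paper's.
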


We next consider the case when $\cover \not\subseteq \setP$. The
following lemma describes the structure of the priority queues used in
the algorithm. 

\begin{lemma}{}
  \label{lem:Q-structure}
  Let $\init \goesto^{*} \vecR\mid Q$. 
  If $Q \neq \varnothing$, then for every $\langle a, i\rangle \in Q$,
    there is a path from $a$ to some $b \in \Sigma \setminus \setP$.
\end{lemma}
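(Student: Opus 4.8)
The plan is to prove the statement by induction on the length of the derivation $\init \goesto^{*} \vecR\mid Q$, showing that the following strengthened invariant is preserved: for every $\langle a,i\rangle\in Q$, there is a path from $a$ to some state in $\Sigma\setminus\setP$. The base case is trivial since $\init$ has an empty queue (and so do the states reached immediately by $\rInitialize$ and $\rUnfold$). For the inductive step, I would go rule by rule through those rules that modify $Q$ or that could be applied when $Q\neq\varnothing$; rules that leave $Q$ untouched preserve the invariant vacuously, and rules that empty $Q$ (or terminate) are also immediate.

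The two rules doing real work are $\rCandidateUnres$ and $\rDecideUnres$. For $\rCandidateUnres$: the new element is $\langle a,N\rangle$ with $a\in\setR_N\setminus\setP$; since $a\notin\setP$ already, the length-zero path from $a$ to itself witnesses the claim. For $\rDecideUnres$: we push $\langle b,i-1\rangle$ where $b\in\pre{\uclosure a}\cap\setR_{i-1}\setminus\uclosure a$, so $b\to c$ for some $c\wqoge a$. By the induction hypothesis there is a path from $a$ to some $d\in\Sigma\setminus\setP$. Here I would invoke the WSTS monotonicity property: since $b\to c$ and $a\wqole c$, and $a\to^{*}d$, a straightforward induction on the length of the path $a\to^{*}d$ gives some $d'$ with $d\wqole d'$ and $c\to^{*}d'$; because $\Sigma\setminus\setP$ is upward-closed and $d\in\Sigma\setminus\setP$, we get $d'\in\Sigma\setminus\setP$, and then $b\to c\to^{*}d'$ is the desired path from $b$ to a state outside $\setP$. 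Finally, $\rConflict$ and $\rInduction$ only remove elements from $Q$ (via $\popMin$) or modify $\vecR$ without touching $Q$, so they preserve the invariant trivially.

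The main obstacle — really the only nonroutine point — is the monotonicity argument in the $\rDecideUnres$ case: one must be careful that WSTS monotonicity as defined here only guarantees $t_1\to^{*}t_2$ (not a single step), and that one needs the upward-closedness of $\Sigma\setminus\setP$ (equivalently, downward-closedness of $\setP$) to conclude that the dominating endpoint is still outside $\setP$. Everything else is bookkeeping over the rule system.
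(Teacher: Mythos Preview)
Your proposal is correct and follows essentially the same approach as the paper: both argue by induction on the derivation, dispatch all rules except \rCandidateUnres{} and \rDecideUnres{} trivially, handle \rCandidateUnres{} via the zero-length path, and in the \rDecideUnres{} case use WSTS monotonicity to lift the path $a\to^{*}d$ along $a\wqole c$ and then invoke upward-closedness of $\Sigma\setminus\setP$. Your explicit remark that monotonicity only yields $\to^{*}$ and therefore requires an inner induction on path length is a detail the paper glosses over with the phrase ``by well-structuredness,'' but the argument is the same.
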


\begin{theorem}{[Soundness of uncoverability]}
Given a WSTS $(\Sigma, I, \to, \wqole)$ and a downward-closed set
$\setP$, if $\init \goesto^{*} \resInvalid$, then $\cover \not \subseteq \setP$.
\end{theorem}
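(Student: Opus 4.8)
The plan is to show that if $\init \goesto^{*} \resInvalid$, then there was a state $\vecR \mid Q$ with $\init \goesto^{+} \vecR \mid Q \goesto \resInvalid$, and the last rule applied was either $\rModelSyn$ or $\rModelSem$. In either case, the minimal element of $Q$ has the form $\langle a, i \rangle$ with some additional property, and I need to produce a path from an initial state to a state outside $\setP$. By Lemma~\ref{lem:Q-structure}, since $\langle a, i \rangle \in Q$, there is already a path from $a$ to some $b \in \Sigma \setminus \setP$; so it suffices to show there is a path from an initial state to $a$ (or, more precisely, to a state that covers $a$, which then combines with the path $a \to^{*} b$ using monotonicity).

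First I would handle the $\rModelSem$ case. Here $\min Q = \langle a, i \rangle$ and $I \cap \uclosure{a} \neq \varnothing$, so there is an initial state $x \in I$ with $a \wqole x$. Combined with the path $a \to^{*} b$ from Lemma~\ref{lem:Q-structure} and the WSTS monotonicity property (applied along the path $a \to^{*} b$), there is a state $b'$ with $b \wqole b'$ and $x \to^{*} b'$. Since $\setP$ is downward-closed and $b \notin \setP$, also $b' \notin \setP$, so $x \to^{*} b'$ witnesses $\reach \not\subseteq \setP$, hence $\cover \not\subseteq \setP$.

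Next I would handle the $\rModelSyn$ case, where $\min Q = \langle a, 0 \rangle$. Here I need to know that $\langle a, 0 \rangle \in Q$ implies $a \in \setR_0 = \dclosure{\setI}$, so that $a \wqole x$ for some $x \in I$; then the same monotonicity argument as above applies. The fact that level-$0$ elements of $Q$ lie in $\setR_0$ should follow from a structural invariant on $Q$ that every $\langle a, i \rangle \in Q$ satisfies $a \in \setR_i$ — this is essentially already needed for (and implicit in) Lemma~\ref{lem:Q-structure}, since it is maintained by $\rCandidateUnres$ (which inserts $\langle a, N \rangle$ with $a \in \setR_N$) and $\rDecideUnres$ (which inserts $\langle b, i-1 \rangle$ with $b \in \setR_{i-1}$), and no rule that fires before $\resInvalid$ shrinks $\setR_0$. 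If this invariant is not stated explicitly I would add it as part of, or alongside, Lemma~\ref{lem:Q-structure}.

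The main obstacle I expect is the bookkeeping of the $Q$-structure invariant: I must confirm that whenever $\rConflict$ or $\rInduction$ shrinks the sets $\setR_1, \ldots, \setR_i$, it does not invalidate the membership claim $a' \in \setR_{i'}$ for elements $\langle a', i' \rangle$ still in $Q$ — in particular, $\rConflict$ pops the minimal element before modifying $\vecR$, and one must check the remaining queue elements are unaffected, or that the needed property (reachability-witness) is robust to these shrinkings. This is routine but must be done carefully; once the invariant is in hand, the two termination cases are short monotonicity arguments as sketched.
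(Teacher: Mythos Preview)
Your proposal is correct and follows essentially the same route as the paper: split on whether the last rule was \rModelSem{} or \rModelSyn{}, invoke Lemma~\ref{lem:Q-structure} to get a path $a \to^{*} b \in \Sigma \setminus \setP$, and then use monotonicity from an initial state $x \wqoge a$.

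The one place you overcomplicate things is the \rModelSyn{} case. You propose proving a general invariant that every $\langle a,i\rangle \in Q$ satisfies $a \in \setR_i$, and you rightly worry about \rConflict{} and \rInduction{} breaking it --- indeed that general invariant is \emph{not} preserved, since \rConflict{} can shrink $\setR_i$ while other level-$i$ elements remain in $Q$. The paper avoids this entirely: it simply observes that $\setR_0 = \dclosure{I}$ is constant throughout the algorithm (the updates in \rConflict{} and \rInduction{} run over $k = 1,\ldots,i$, never touching $\setR_0$). Hence whenever $\langle a,0\rangle$ was pushed by \rDecideUnres{} with $a \in \setR_0$, this membership still holds at the moment \rModelSyn{} fires, giving $\uclosure{a} \cap I \neq \varnothing$ and reducing to the \rModelSem{} argument. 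So your ``main obstacle'' dissolves once you restrict attention to level $0$.
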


\begin{proof}
  The assumption $\init \goesto^{*} \resInvalid$ implies that there is
  some state $\vecR\mid Q$ such that $\init \goesto^{*} \vecR\mid Q \goesto
  \resInvalid$, and the last applied rule was either \rModelSyn{} or
  \rModelSem. In both cases $Q \neq \emptyset$.

  If the last applied rule was \rModelSem, there is an $\langle a,
  i\rangle \in Q$ such that $\uclosure{a} \cap I \neq \emptyset$. By
  Lemma~\ref{lem:Q-structure} there is a path from $a$ to $b \in
  \Sigma \setminus \setP$. Let $a' \in \uclosure{a} \cap I$.  Since
  $(\Sigma, I, \to, \wqole)$ is a WSTS, there is $b'$ such that $a'
  \to^{*} b'$ and $b' \wqoge b$. The set $\Sigma \setminus \setP$ is
  upward-closed, and thus $b' \in \Sigma \setminus \setP$. The path
  $a' \to^{*} b'$ is a path from $I$ to $\Sigma \setminus \setP$,
  proving that $\cover \not\subseteq \setP$.

  If the last applied rule was \rModelSyn, then $\langle a, 0\rangle
  \in Q$. This implies $a \in \setR_0 = \dclosure{I}$, as $\setR_0$ is
constant in the algorithm. 
 Equivalently, $\uclosure{a} \cap I \neq \varnothing$ and 
we apply the same arguments as in the case for
  \rModelSem.  \qed
\end{proof}

\subsection{Termination}

While the above non-deterministic rules guarantee soundness for any WSTS, termination
requires some additional choices. 
We modify the $\rDecideUnres$ and $\rCandidateUnres$ rules into more
restricted rules \rDecide{} and \rCandidate, while all other rules are unchanged.
Figure~\ref{fig:rules-restricted} shows the new rules \rCandidate{} and \rDecide{}.
These rules additionally use a sequence of sets $D_i$. 
Intuitively, there can be infinitely many elements in $\setR_N \setminus \setP$. 
Sets $D_i$ provide a finite representation of those elements. 

Recall the sequence $\setU_i$ of backward reachable states from \eqref{abdulla-seq}.
We define sets $D_i$ using sets $\setU_i$. 
The set $D_i$ captures all new elements that are introduced in $\setU_i$ and that
were not present in the previous iterations.
Formally, we define sets $D_i$ as follows:
\begin{align}
  D_0 := \min (\Sigma \setminus \setP) & \quad & 
  D_{i+1} := \bigcup_{a \in D_i}\min (\pre{\uclosure{a}}) \setminus \setU_i\enspace .
\end{align}
By induction, and the finiteness of the set of minimal elements, we have that $D_i$ is finite for all $i\geq 0$.
Further, assume that $\setU_L = \setU_{L+1}$.
% This sequence eventually stabilizes and there is $L$ such that $\setU_L = \setU_{L+i}$ for all $i \ge 0$.
% We use the definition of $D_i$ to show that there are only finitely many different sets $\setR_i$.
Then, $D_i = \varnothing$ for all $i > L$.
As a consequence, the set $\bigcup_{i \geq 0} D_i$ is finite.

% \begin{lemma}
% 	\label{prop:finite_di}
% 	For $i > L$, $D_i = \varnothing$. This implies that the set $\bigcup_{i \ge 0} D_i$ is finite.
%    \end{lemma}

    \begin{figure}[t]
	\begin{align*}
	      &\inference[\rCandidate]{a \in \setR_N \cap D_0}
	    		{\vecR\mid \varnothing \goesto \vecR\mid \langle a,N \rangle}
              &&\inference[\rDecide]{\min Q = \langle a,i \rangle
			\quad i > 0 \quad b \in D_{N-i+1} \cap \setR_{i-1} \quad b \to a}
			{\vecR\mid Q \goesto \vecR\mid \push{Q}{\langle b, i-1\rangle}}
	\end{align*}
	\caption{Rules replacing \rCandidateUnres{} and
		\rDecideUnres{} in Fig.~\ref{fig:rules-generic}.
	\label{fig:rules-restricted}
        }
    \end{figure}

It is easy to show that the restricted rules still preserve the invariants
\eqref{eq:I1} -- \eqref{eq:I4}, 
and thus the modified algorithm is still sound.
From now, we focus on the modified algorithm. 
% We begin with some properties of the sequence $D_i$.
% USED IN THE APPENDIX 
% \begin{lemma}\label{lem:D_iProperties}
% Consider a WSTS $(\Sigma, I, \to, \wqole)$, a downward-closed set
% $\setP$, and the sequence of sets $D_i$. 
% Let $\init \goesto^{*} \vecR\mid Q$.
% % The sets $D_i$ satisfy the following properties:
% \begin{enumerate}
%   % \item $D_0 \subseteq \Sigma \setminus \setP$ and $D_{i+1} \subseteq \pre{D_i} \setminus U_i$.
%   \item Whenever $\setR_N \setminus \setP \neq \varnothing$, there exists an $x \in \setR_N \cap D_0$.
%   \item For all $a \in D_i$, if $\pre{\uclosure{a}} \cap \setR_{N-i-1} \neq \varnothing$, there exists an element 
% $x \in \pre{\uclosure{a}} \cap D_{i+1} \cap \setR_{N-i-1}$.
%   % \item $D_i$ is finite for all $i \ge 0$.
%   \item For all $i \ge 1$, 
% 		$\setR_i = \Sigma \setminus \{ r_{i,1}, \ldots, r_{i,m_i} \}$,
% 		where for all $j = 1, \ldots, m_i$, there is a
% 		$k \ge 0$ and a $d \in D_k$ such that $r_{i,j} \wqole d$.
%   \item For all $\langle a, i\rangle \in Q$, we have $a \in D_{N-i}$.
% \end{enumerate}
% \end{lemma}

% Moreover, in the new transition system, a set $D_i$ describes the states that \rCandidate{} and \rDecide{} will find
% to add them to the queue.
%     \begin{lemma}{}
% 	\label{lem:finite-choice}
% 	\begin{enumerate}
% 	    \item For all $i \ge 1$,
% 		$\setR_i = \Sigma \setminus \{ r_{i,1}, \ldots, r_{i,m_i} \}$,
% 		where for all $j = 1, \ldots, m_i$, there is a
% 		$k \ge 0$ and a $d \in D_k$ such that $r_{i,j} \le d$.
% 	    \item For all $\langle a, i\rangle \in Q$, $a \in D_{N-i}$.
% 	\end{enumerate}
%     \end{lemma}

To show that the algorithm always terminates, we first show that the
system can make progress until some final state is reached.

% USE THE LEMMA IN THE APPENDIX
%    \begin{lemma}[Progress]
%	\label{lem:progress}
%	If $\init \goesto^{*} \vecR\mid Q$, then either $\vecR\mid Q \goesto \vecRI\mid Q'$, or $\vecR\mid Q \goesto \resValid$, or
%	$\vecR\mid Q \goesto \resInvalid$.
%    \end{lemma}

    \begin{proposition}[Maximal finite sequences]
	\label{cor:max-finite-seqs}
	Let $\init = \sigma_0 \goesto \sigma_1 \goesto \cdots \goesto \sigma_K$
	be a maximal sequence of states, i.e., a sequence such that there is
	no $\sigma'$ such that $\sigma_K \goesto \sigma'$. Then
	$\sigma_K = \resValid$ or $\sigma_K = \resInvalid$.
    \end{proposition}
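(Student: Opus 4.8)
The plan is to reduce the proposition to a single \emph{progress} claim: every configuration of the algorithm that is not one of the halting states $\resValid$, $\resInvalid$ has at least one enabled rule. Granting this, the proposition is immediate — in a maximal sequence $\init = \sigma_0 \goesto \cdots \goesto \sigma_K$ the final state $\sigma_K$ has no successor, and it cannot be $\init$ itself (rule $\rInitialize$ is always enabled there), so it must be $\resValid$ or $\resInvalid$. All the real work is therefore the progress claim for an arbitrary reachable configuration $\vecR \mid Q$, which I would prove by a case analysis driven by the premises of the rules in Figures~\ref{fig:rules-generic} and~\ref{fig:rules-restricted}.

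Before the case analysis I would record an auxiliary invariant on the queue, verified by induction alongside \eqref{eq:I1}--\eqref{eq:I4}: every $\langle a, i\rangle \in Q$ satisfies $a \in D_{N-i}$, where $N = \operatorname{length}(\vecR)$. This holds on entry via $\rCandidate$ (there $a \in D_0 = D_{N-N}$), is propagated by $\rDecide$ (there $b \in D_{N-i+1} = D_{N-(i-1)}$), and is preserved by $\rConflict$ (which modifies $Q$ only by a pop) and by the remaining queue-agnostic rules (note $\rUnfold$ and $\rInduction$ act only when $Q = \varnothing$, so changing $N$ is harmless). Alongside this I would use two elementary facts about the sequence $\setU_i$ of \eqref{abdulla-seq} and the $D_j$, both immediate by induction from the definitions: $D_j \subseteq \setU_j$, and $\setU_j = \uclosure{(\bigcup_{k \le j} D_k)}$.

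Now the case analysis. If $Q = \varnothing$: if $\setR_N \subseteq \setP$ then $\rUnfold$ is enabled; otherwise pick $a \in \setR_N \setminus \setP$, and since $\setR_N$ is downward-closed while $\Sigma \setminus \setP$ is upward-closed with basis $\min(\Sigma \setminus \setP) = D_0$, some $a' \in D_0$ lies $\wqole$-below $a$, hence $a' \in \setR_N \cap D_0$ and $\rCandidate$ is enabled (so this branch needs neither $\rValid$ nor $\rInduction$). If $Q \neq \varnothing$, let $\langle a, i\rangle = \min Q$. If $i = 0$, $\rModelSyn$ is enabled; if $i > 0$ and $\uclosure{a} \cap \setI \neq \varnothing$, $\rModelSem$ is enabled. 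In the remaining case, $i > 0$ and $\uclosure{a} \cap \setI = \varnothing$, I would argue that $\rConflict$ or $\rDecide$ is enabled: if $\pre{\uclosure{a}} \cap \setR_{i-1} \setminus \uclosure{a} = \varnothing$, then $b := a$ satisfies all three defining conditions of \eqref{eq:gen} relative to $\setR_{i-1}$ (namely $a \wqole a$, $\uclosure{a} \cap \setI = \varnothing$, and the emptiness just assumed), so $\gen_{i-1}(a) \ni a$ is non-empty and $\rConflict$ fires; otherwise fix $c \in \pre{\uclosure{a}} \cap \setR_{i-1} \setminus \uclosure{a}$, take a minimal element $\hat c \wqole c$ of $\pre{\uclosure{a}}$, which by downward-closedness still lies in $\setR_{i-1}$, and combine the queue invariant $a \in D_{N-i}$ with the inclusion $D_{N-i+1} \supseteq \min(\pre{\uclosure{a}}) \setminus \setU_{N-i}$ to exhibit from $\hat c$ a witness enabling $\rDecide$.

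This last extraction is the step I expect to be the main obstacle, and it is really the crux of the lemma: the restricted rule $\rDecide$ only admits a predecessor taken from the finite set $D_{N-i+1}$, so one must show that whenever a level-$i$ queue element $a$ is live ($i > 0$, not at an initial state) and cannot be ``conflicted away'', some element of $D_{N-i+1} \cap \setR_{i-1}$ still lies in $\pre{\uclosure{a}}$. When $\hat c \notin \setU_{N-i}$ this is immediate, since then $\hat c \in D_{N-i+1}$; the delicate sub-case is $\hat c \in \setU_{N-i}$, where one must descend through $\setU_{N-i} = \uclosure{(\bigcup_{k \le N-i} D_k)}$ and combine monotonicity of the WSTS with the bookkeeping relating the $D_j$, the $\setU_j$, and the current approximations $\setR_k$ to supply a replacement witness. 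Everything else in the argument is routine checking of rule premises.
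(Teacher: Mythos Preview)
Your overall strategy --- reduce to a progress claim and carry out a rule-by-rule case analysis, supported by the queue invariant $\langle a,i\rangle \in Q \Rightarrow a \in D_{N-i}$ --- is exactly the paper's approach (its Progress lemma together with its finite-choice invariant). The handling of $\rInitialize$, $\rUnfold$, $\rCandidate$, $\rModelSyn$, $\rModelSem$, and $\rConflict$ is correct and matches the paper.

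The gap is precisely where you flag it, but it is not a ``delicate sub-case'' to be worked around: it is vacuous. The observation you are missing is that the invariants \eqref{eq:I2}--\eqref{eq:I4} force
\[
\setR_{i-1} \cap \setU_{N-i} = \varnothing,
\]
which the paper isolates as a separate lemma and proves by a short induction on $N-i$ (base case is \eqref{eq:I4}; the inductive step uses that if $x \in \pre{\setU_{N-i-1}}$ then any successor witnessing this lies outside $\setR_i$ by induction hypothesis, whence $x \notin \setR_{i-1}$ by \eqref{eq:I2}, and \eqref{eq:I3} handles the $\setU_{N-i-1}$ part). Since your $\hat c$ lies in $\setR_{i-1}$ by downward-closedness, the case $\hat c \in \setU_{N-i}$ simply cannot occur; hence $\hat c \in \min(\pre{\uclosure{a}}) \setminus \setU_{N-i} \subseteq D_{N-i+1}$ directly, and $\rDecide$ fires. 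Your suggested fallback --- descending through $\setU_{N-i} = \uclosure{(\bigcup_{k\le N-i} D_k)}$ and invoking monotonicity to manufacture a replacement witness --- is not only unnecessary, it is unclear how it would ever land in the specific finite set $D_{N-i+1}$. Add the disjointness lemma and your argument is complete and coincides with the paper's. (The identity $\setU_j = \uclosure{(\bigcup_{k\le j} D_k)}$ you record is correct but plays no role once disjointness is available.)
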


We prove the termination of the algorithm by defining a well-founded ordering
on the tuples $\vecR\mid Q$.

    \begin{definition}
	Let $\mathbf{\dclose{A}} = (\dclose{A_1}, \ldots, \dclose{A_N})$
	and $\mathbf{\dclose{B}} = (\dclose{B_1}, \ldots, \dclose{B_N})$
	be two finite sequences of downward-closed sets of the equal length $N$.
Define $\dclosevec{A} \sqsubseteq \dclosevec{B}$
if{}f $\dclose{A_i} \subseteq \dclose{B_i}$ for all $i=1,\ldots,N$.
% NOT USED IN THE MAIN TEXT:
% Write $\dclosevec{A} \sqsubset \dclosevec{B}$ if 
% $\mathbf{\dclose{A}} \sqsubseteq \mathbf{\dclose{B}}$ but
% $\mathbf{\dclose{A}} \neq \mathbf{\dclose{B}}$.
%
Let $Q$ be a priority queue whose elements are tuples $\langle a, i \rangle \in \Sigma \times \NN$, 
and let $N$ be a natural number.
Define
	$\ell_N(Q):=\min(\{ i \mid \langle a, i \rangle \in Q\}\cup\{ N+1 \})$,
to be the smallest priority in $Q$ or $N+1$ if $Q$ is empty.

For two states $\vecR\mid Q$ and $\vecRI\mid Q'$, such $\operatorname{length}(\vecR) = \operatorname{length}(\vecRI) = N$,
we define the ordering $\le_s$ as: 
\begin{align*}
	    \vecR\mid Q \le_s \vecRI\mid Q' &:\iff \vecR \sqsubseteq \vecRI
		\land (\vecR = \vecRI \impl \ell_{N}(Q) \le
		\ell_{N}(Q')
	\end{align*}
and we write $\vecR\mid Q <_s \vecRI\mid Q'$ if $\vecR\mid Q \le_s \vecRI\mid Q'$
but $\vecR \neq \vecRI$ or $Q\neq Q'$. 
\end{definition}

    \begin{lemma}[$\le_s$ is a well-founded quasi-order.]
	\label{lem:les_wfqo}
	The relation $<_s$ is a well-founded strict quasi-ordering on the set
	$(\mathcal D)^{*} \times \mathcal Q$, where
	$\mathcal D$ is a set of downward-closed sets over $\Sigma$, and
	$\mathcal Q$ denotes the set of priority queues over
	$\Sigma \times \en$.
    \end{lemma}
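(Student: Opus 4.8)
The plan is to recognise $\le_s$ as a lexicographic product of two well-founded orders, using as the only substantial ingredient the stabilisation property of chains of upward-closed sets in a wqo recalled in Section~\ref{sec:prelim}.

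First I would dispose of the length issue. By construction $\le_s$, and hence its strict part $<_s$, only relates pairs whose $\vecR$-components are sequences of the same length $N$; consequently any would-be infinite $<_s$-descending chain lives entirely inside one length class, and it suffices to prove well-foundedness for each fixed $N$. Within a fixed length, unfolding the definition shows that $\vecR\mid Q \le_s \vecRI\mid Q'$ means precisely: compare $\vecR$ and $\vecRI$ under the product order $\sqsubseteq$ and, only in the case $\vecR = \vecRI$, break the tie by comparing the single natural number $\ell_N(Q)$ with $\ell_N(Q')$ under $\le$. Since $\sqsubseteq$ is an $N$-fold product of set inclusion it is a genuine partial order; in particular it is antisymmetric, which makes the tie-breaking coherent (in a chain $\vecR \sqsubseteq \vecRI \sqsubseteq \vecRII$ with $\vecR = \vecRII$ one gets $\vecR = \vecRI = \vecRII$). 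From this, reflexivity and transitivity of $\le_s$ follow by a routine case split, and then $<_s$, being the strict part of a preorder, is automatically a strict quasi-order.

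For well-foundedness I would proceed through three layers. (i) The reverse-inclusion order on downward-closed subsets of $\Sigma$ is well-founded: a strictly decreasing chain $\dclose{C_0} \supsetneq \dclose{C_1} \supsetneq \cdots$ of downward-closed sets gives, on taking complements in $\Sigma$, a strictly increasing chain of upward-closed sets, contradicting the fact recalled in Section~\ref{sec:prelim} that every non-decreasing chain of upward-closed sets in a wqo eventually stabilises. (ii) Hence the product order $\sqsubseteq$ on length-$N$ sequences of downward-closed sets is well-founded: along any $\sqsubseteq$-descending chain each of the $N$ coordinates is non-increasing, hence eventually constant by (i), and taking the maximum of the $N$ stabilisation indices makes the whole chain eventually constant. (iii) Finally I would combine this with the well-foundedness of $(\NN,\le)$ — where in fact $\ell_N(Q)\in\{0,\ldots,N+1\}$ is even bounded — as for a lexicographic product: given a hypothetical infinite chain $x_0 >_s x_1 >_s \cdots$ with $x_k = \vecR_k\mid Q_k$, the $\vecR$-components form a $\sqsubseteq$-descending chain, so by (ii) $\vecR_k = \vecR_K$ for all $k \ge K$; from index $K$ on every remaining step is a tie-break step and thus forces $\ell_N(Q_{k+1}) < \ell_N(Q_k)$, producing an infinite strictly decreasing sequence of natural numbers — a contradiction. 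Hence $<_s$ is well-founded, which together with the previous paragraph gives the claim.

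I expect the only real obstacle to be the bookkeeping around the queue component: one has to notice that a priority queue $Q$ influences $\le_s$ solely through the number $\ell_N(Q)$, so that $\le_s$ is properly a quasi-order rather than a partial order (distinct queues with the same $\ell_N$ are $\le_s$-equivalent), and that once the $\vecR$-part of a descending chain has stabilised this number is both bounded by $N+1$ and must strictly decrease at each subsequent step. Beyond that, the argument is a direct unfolding of the definitions together with the single external appeal to stabilisation of ascending chains of upward-closed sets.
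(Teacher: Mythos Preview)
Your proposal is correct and follows essentially the same strategy as the paper: view $\le_s$ as a lexicographic combination of the componentwise order $\sqsubseteq$ on fixed-length vectors of downward-closed sets with the natural order on $\ell_N(Q)$, establish well-foundedness of $\sqsubseteq$ by passing to complements and invoking stabilisation of ascending chains of upward-closed sets, and then conclude well-foundedness of $<_s$ from that of the lexicographic product. The only cosmetic difference is in the $\sqsubseteq$ step---the paper uses pigeonhole to extract a single coordinate with infinitely many strict drops, whereas you stabilise each of the $N$ coordinates separately and take the maximum index---but both arguments are standard and interchangeable.
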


% USED IN THE APPENDIX:
%\begin{lemma}
%	\label{lem:rules-R-order}
%	If $\vecR\mid Q \goesto \vecRI\mid Q'$ as a result of  applying the
%	\rCandidate,  \rDecide, \rConflict, or \rInduction{} rule, then
%	$\vecR\mid Q >_s \vecRI\mid Q'$.
%    \end{lemma}

% An easy consequence of Lemma~\ref{lem:rules-R-order} is the following corollary that will play an important role
% in the proof of termination.
The following proposition characterizes infinite runs of the algorithm.
The proof follows from the observation that
if $\vecR\mid Q \goesto \vecRI\mid Q'$ as a result of  applying the
\rCandidate,  \rDecide, \rConflict, or \rInduction{} rules, then
$\vecR\mid Q >_s \vecRI\mid Q'$.

    \begin{proposition}[Infinite sequence condition]\label{prop:inner-loop-terminates}
	For every infinite sequence
	$\init \goesto \sigma_1 \goesto \sigma_2 \goesto \cdots$,
	there are infinitely many $i$ such that $\sigma_i \goesto \sigma_{i+1}$
	by applying the rule $\rUnfold$.
    \end{proposition}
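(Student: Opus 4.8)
The plan is to argue by contradiction. Suppose $\init \goesto \sigma_1 \goesto \sigma_2 \goesto \cdots$ is an infinite run in which \rUnfold is applied only finitely often. I would first dispose of the remaining ``non-recurring'' rules. The rule \rInitialize is enabled only in the state $\init$, so it is used exactly once, at the first step. The rules \rModelSyn, \rModelSem and \rValid each lead to one of the terminal states $\resInvalid$ or $\resValid$, which have no successors, so in an infinite run none of them is ever used. Hence there is an index $i_0$ such that, for all $i \ge i_0$, the step $\sigma_i \goesto \sigma_{i+1}$ is an application of \rCandidate, \rDecide, \rConflict, or \rInduction. None of these four rules changes the length of the first component (only \rUnfold appends a set), so every $\sigma_i$ with $i \ge i_0$ is a pair $\vecR \mid Q$ with $\operatorname{length}(\vecR) = N$ for one fixed $N$; in particular $\le_s$ and $<_s$ are defined on each consecutive pair $(\sigma_i,\sigma_{i+1})$, $i \ge i_0$.

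The core of the argument is the observation that each of these four rules strictly decreases $<_s$, i.e.\ $\vecR \mid Q >_s \vecRI \mid Q'$ whenever $\vecR \mid Q \goesto \vecRI \mid Q'$ by one of them. For \rCandidate and \rDecide the region vector is unchanged, so it suffices that the minimal queue priority $\ell_N(Q)$ strictly decreases: from $N+1$ to $N$ in \rCandidate, and from $i$ to at most $i-1$ in \rDecide, the pushed element $\langle b, i-1 \rangle$ being genuinely new because $\langle a, i \rangle$ was the minimum. For \rConflict and \rInduction the queue can only shrink or stay equal, so instead I would show that the region vector strictly shrinks. In \rConflict, the minimal queue element $\langle a, i \rangle$ has $a \in \setR_i$ -- an easy auxiliary invariant, since states are enqueued at level $i$ (by \rCandidate or \rDecide) only with membership in $\setR_i$, and this has to be checked to survive the region updates -- and then $b \wqole a$ together with downward-closedness of $\setR_i$ gives $b \in \setR_i$; since $b \in \uclosure{b}$, the new $i$-th component $\setR_i \setminus \uclosure{b}$ is a proper subset of $\setR_i$, so $\vecRI \sqsubseteq \vecR$ with $\vecRI \neq \vecR$, and $\vecR \mid Q >_s \vecRI \mid Q'$. \rInduction is analogous, using \eqref{eq:I3} to propagate the removal of $\uclosure{b}$ through $\setR_1, \ldots, \setR_{i+1}$; applications that leave the configuration unchanged are immaterial to the statement and may be discounted (equivalently, read the rules as implicitly guarded by the requirement that they modify the state).

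With the observation in hand, $\sigma_{i_0} >_s \sigma_{i_0+1} >_s \sigma_{i_0+2} >_s \cdots$ is an infinite strictly decreasing chain for $<_s$, contradicting the well-foundedness of $<_s$ from Lemma~\ref{lem:les_wfqo}. Therefore \rUnfold is applied infinitely often in every infinite run.

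I expect the main obstacle to lie entirely in that ``observation'', and within it in the cases \rConflict and \rInduction, where one must certify that the region vector genuinely shrinks. This relies on the auxiliary invariant that a queue entry at level $i$ names a state currently in $\setR_i$ -- which itself has to be shown to be preserved by the vector modifications of \rConflict -- together with the monotonicity invariant \eqref{eq:I3}, and on being careful about rule applications that do not change the configuration at all. The rest -- classifying which rules can occur in an infinite run, and invoking well-foundedness -- is routine.
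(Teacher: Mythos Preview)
Your argument is the paper's: isolate an \rUnfold-free tail, show each of \rCandidate, \rDecide, \rConflict, \rInduction yields a strict $<_s$-decrease, and invoke Lemma~\ref{lem:les_wfqo}. You are in fact more careful than the paper on two points. For \rConflict the paper's appendix asserts $b\in\setR_i$ ``by definition of $\gen$'', which is not literally contained in that definition; your route via the queue invariant $\langle a,i\rangle\in Q\Rightarrow a\in\setR_i$, then $b\wqole a$ and downward closure of $\setR_i$, is the right justification. For \rInduction the paper writes only ``analogous to \rConflict'', but here no queue element anchors $b$, and if $r_{i,j}$ is already excluded from $\setR_{i+1}$ the update can be a no-op; your proposed reading that rules are implicitly guarded against vacuous applications is the intended one and closes this small gap, which the paper shares.
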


We first prove that the algorithm terminates for the case when $\cover \not\subseteq \setP$.

\begin{lemma}
	\label{lem:no-bad-unfolds}
        Let $(\Sigma, I,\to,\wqole)$ be a WSTS and $\setP$ a downward-closed set
        such that $\cover_k \cap (\Sigma \setminus \setP) \not = \emptyset$. 
	For every sequence $\init \goesto \sigma_1 \goesto^{*} \sigma_n$,
	there are at most $k$ different values for $i$ such that
	$\sigma_i \goesto \sigma_{i+1}$ using the $\rUnfold$ rule.
\end{lemma}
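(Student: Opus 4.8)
The plan is to bound $\len(\vecR)$ rather than to count $\rUnfold$ steps directly, and then to observe that the two quantities coincide.

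First I would record the bookkeeping fact that $\rUnfold$ is the \emph{only} rule that changes $\len(\vecR)$: $\rInitialize$ produces the state $\dclosure{\setI}\mid\varnothing$ with $\len(\vecR)=0$; $\rUnfold$ replaces $\vecR$ by $\vecR\cdot\Sigma$ and hence increments the length by exactly one; and every other rule (including the restricted rules $\rCandidate$ and $\rDecide$, as well as $\rConflict$ and $\rInduction$) alters only $Q$ or the individual entries $\setR_1,\dots,\setR_i$ while leaving the length fixed. Consequently, in any sequence $\init=\sigma_0\goesto\sigma_1\goesto\cdots\goesto\sigma_n$, the number of indices $i$ with $\sigma_i\goesto\sigma_{i+1}$ by $\rUnfold$ equals the value of $\len(\vecR)$ in the final state; more precisely, if the $(k+1)$-st such index exists, then the state $\sigma_j=\vecR\mid Q$ immediately before it has $\len(\vecR)=k$.

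Second I would extract the over-approximation property $\cover_i\subseteq\setR_i$ from the invariants. By Lemma~\ref{lemma:rulesI14invs}, together with the remark that the restricted rules $\rCandidate$ and $\rDecide$ preserve the same invariants, and induction on the length of the derivation, every state $\vecR\mid Q$ with $\init\goesto^{*}\vecR\mid Q$ satisfies \eqref{eq:I1}--\eqref{eq:I4}. These imply $\cover_i\subseteq\setR_i$ for all $0\le i\le\len(\vecR)$: one proves $\reach_i\subseteq\setR_i$ by induction on $i$, using $\setR_0=\dclosure{\setI}\supseteq\reach_0$ for the base case and $\reach_{i+1}\subseteq\reach_i\cup\post{\reach_i}\subseteq\setR_i\cup\post{\setR_i}\subseteq\setR_{i+1}$ for the step (by \eqref{eq:I3} and \eqref{eq:I2}), and then takes downward closures, using that each $\setR_i$ is downward-closed.

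Finally I would argue by contradiction: suppose $\rUnfold$ is applied at least $k+1$ times along some sequence out of $\init$, and let $\sigma_j=\vecR\mid Q$ be the state immediately preceding the $(k+1)$-st application. By the first step $\len(\vecR)=k$, so the premise of $\rUnfold$ gives $\setR_k=\setR_{\len(\vecR)}\subseteq\setP$; by the second step $\cover_k\subseteq\setR_k$. Hence $\cover_k\subseteq\setP$, i.e.\ $\cover_k\cap(\Sigma\setminus\setP)=\varnothing$, contradicting the hypothesis. Therefore $\rUnfold$ is applied at most $k$ times. There is no genuinely hard step; the only place the argument can slip is index matching --- that $\len(\vecR)=k$ exactly when the $(k+1)$-st $\rUnfold$ fires, and that $\cover_k\subseteq\setR_k$ is available at the \emph{top} index (not merely for $i<\len(\vecR)$, which is all \eqref{eq:I4} directly provides), which is precisely why invoking the $\rUnfold$ premise $\setR_N\subseteq\setP$ rather than \eqref{eq:I4} is the clean way to obtain $\setR_k\subseteq\setP$.
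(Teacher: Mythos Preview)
Your argument is correct and follows essentially the same route as the paper's proof: after $k$ applications of $\rUnfold$ one has $\len(\vecR)=k$, and by the invariants the premise $\setR_N\subseteq\setP$ of a further $\rUnfold$ is violated. The only cosmetic difference is that the paper exhibits an explicit witness $s_N\in\setR_N\setminus\setP$ by pushing a concrete path $s_0\to\cdots\to s_N$ through \eqref{eq:I2}, whereas you phrase the same step via the inclusion $\cover_k\subseteq\setR_k$ (which the paper also states in the body text); your careful remark about using the $\rUnfold$ premise rather than \eqref{eq:I4} at the top index is exactly the right way to avoid the off-by-one pitfall.
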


\begin{theorem}{[Termination when $\cover \not\subseteq \setP$]}
	\label{thm:termination-reachable}
Given a WSTS $(\Sigma, I, \to, \wqole)$ and a downward-closed set
$\setP$, if $\cover \not\subseteq \setP$, then the algorithm terminates and
all maximal execution sequences have the form $\init \goesto^{*} \resInvalid$.
\end{theorem}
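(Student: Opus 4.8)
The plan is to combine Lemma~\ref{lem:no-bad-unfolds} with Proposition~\ref{prop:inner-loop-terminates} to rule out infinite runs, and then use Proposition~\ref{cor:max-finite-seqs} together with the soundness results to pin down the form of the terminal state. Since $\cover \not\subseteq \setP$, there is some $k$ with $\cover_k \cap (\Sigma \setminus \setP) \neq \emptyset$; fix the least such $k$. Lemma~\ref{lem:no-bad-unfolds} then says that along any execution $\init \goesto \sigma_1 \goesto^{*} \sigma_n$ the $\rUnfold$ rule is applied for at most $k$ distinct values of $i$ (equivalently, the length $N$ of the vector never exceeds $k$). Now suppose, for contradiction, that there were an infinite run $\init \goesto \sigma_1 \goesto \sigma_2 \goesto \cdots$. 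By Proposition~\ref{prop:inner-loop-terminates}, infinitely many steps of this run are $\rUnfold$ steps; but each $\rUnfold$ step increments the length of the vector by one (via $\vecR \mapsto \vecR \cdot \Sigma$), and a length-increasing step uses a fresh value of $i$, so there would be infinitely many distinct values of $i$ at which $\rUnfold$ fires — contradicting the bound of at most $k$. Hence every run from $\init$ is finite.

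It remains to identify the terminal state of a maximal (hence finite, by the previous paragraph) execution sequence. By Proposition~\ref{cor:max-finite-seqs}, any maximal sequence $\init = \sigma_0 \goesto \cdots \goesto \sigma_K$ ends in $\sigma_K \in \{\resValid, \resInvalid\}$. We must exclude $\sigma_K = \resValid$. This is where soundness enters: if some maximal run ended in $\resValid$, then $\init \goesto^{*} \resValid$, so by Theorem~\ref{tm:soundCover} we would have $\cover \subseteq \setP$, contradicting our hypothesis $\cover \not\subseteq \setP$. Therefore every maximal execution sequence has the form $\init \goesto^{*} \resInvalid$, which is exactly the claim.

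The one point needing a little care — and the natural candidate for the main obstacle — is the precise accounting in the contradiction step: one must argue that an $\rUnfold$ transition is the only way the vector length $N$ changes and that it always strictly increases $N$ by one (inspect the rule: $\vecR \mid \varnothing \goesto \vecR \cdot \Sigma \mid \varnothing$), so that "infinitely many $\rUnfold$ applications'' genuinely forces "infinitely many distinct lengths $N$,'' which then forces, for arbitrarily large $k'$, an execution prefix in which $\rUnfold$ has fired at $k'$ distinct levels, contradicting Lemma~\ref{lem:no-bad-unfolds} with the fixed $k$. All of this is routine given the earlier results; the theorem is essentially a bookkeeping consequence of Lemma~\ref{lem:no-bad-unfolds}, Proposition~\ref{prop:inner-loop-terminates}, Proposition~\ref{cor:max-finite-seqs}, and Theorem~\ref{tm:soundCover}.
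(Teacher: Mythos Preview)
Your proposal is correct and follows essentially the same route as the paper's own proof: from $\cover \not\subseteq \setP$ obtain a $k$ witnessing $\cover_k \cap (\Sigma\setminus\setP)\neq\emptyset$, combine Lemma~\ref{lem:no-bad-unfolds} with Proposition~\ref{prop:inner-loop-terminates} to rule out infinite runs, then use Proposition~\ref{cor:max-finite-seqs} and Theorem~\ref{tm:soundCover} to conclude that the terminal state must be $\resInvalid$. The paper's argument is the same, only more terse (and it phrases the first step via an explicit path $x'\to^k y'$ with $y'\in\Sigma\setminus\setP$, which is equivalent to your $\cover_k$ formulation).
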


    \begin{proof}
	Since $\cover \not\subseteq \setP$, there is a state
	$y \in \cover \setminus \setP$. By the definition of \cover,
	there are states $x', y'$ such that $x' \in I$, $y' \wqoge y$
	and $x' \to^k y'$ for some $k \ge 0$. Because $\Sigma \setminus \setP$
	is upward-closed, we have $y' \in \Sigma \setminus \setP$. Combining
	Lemma \ref{lem:no-bad-unfolds} and Proposition
	\ref{prop:inner-loop-terminates}, we prove that the algorithm terminates.

	Let $\init \goesto^{*} \sigma$ be a maximal execution. By
	Proposition~\ref{cor:max-finite-seqs}, $\sigma = \resValid$ or
	$\sigma = \resInvalid$. By Theorem~\ref{tm:soundCover},
	$\sigma \neq \resValid$.
\qed
    \end{proof}

To prove that the algorithm terminates when $\cover \subseteq \setP$, we use an additional assumption:
\begin{verse}
Apply $\rValid$ whenever it is applicable. \hfill ($\dagger$)
\end{verse}
This is natural assumption: since the algorithm is used to decide the coverability
problem and $\rValid$ answers the problem positively, choosing the
$\rValid$ rule when it is applicable is the most efficient choice.
The main argument for showing the termination will reduce to showing that, for downward-finite WSTS, we can
generate only a finite number of different sets $\setR_i$, so $\rValid$ will be
applicable at some point. 
The key combinatorial property of downward-finite wqos is as follows.

\begin{lemma}
	\label{lem:termination}
Let $(\Sigma,\wqole)$ be a downward-finite wqo and let $D$ be a finite set.
Consider a sequence $\setR_0 \subseteq \setR_1 \subseteq \ldots$,
where each $\setR_i = \Sigma \setminus \uclosure{\set{r_{i,1},\ldots, r_{i,m_i}}}$
for $r_{i,j}\in \dclosure{D}$.
Then there is $K\in \NN$ such that $R_K = R_{K+1}$.
\end{lemma}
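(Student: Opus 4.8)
The plan is to leverage downward-finiteness to bound the "complexity" of each set $\setR_i$ and then apply the stabilization property of ascending chains of upward-closed sets. The central observation is that each $\setR_i$ is the complement of an upward-closed set whose minimal elements $r_{i,j}$ all lie in the fixed finite downward-closed set $\dclosure{D}$. Since $(\Sigma,\wqole)$ is downward-finite and $D$ is finite, $\dclosure{D} = \bigcup_{d\in D}\dclosure{d}$ is a \emph{finite} set. Therefore each basis $\{r_{i,1},\ldots,r_{i,m_i}\}$ is a subset of a fixed finite set, and there are only finitely many possible upward-closed sets of the form $\uclosure{\{r_{i,1},\ldots,r_{i,m_i}\}}$ with generators drawn from $\dclosure{D}$. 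Hence the sequence $\Sigma\setminus\setR_0 \supseteq \Sigma\setminus\setR_1 \supseteq \ldots$ (which is descending because $\setR_i\subseteq\setR_{i+1}$) takes only finitely many distinct values.

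More directly: the complements $\complemnt{\setR_i} := \Sigma\setminus\setR_i$ form a non-increasing sequence of upward-closed sets, equivalently $\setR_0\subseteq\setR_1\subseteq\cdots$ is a non-decreasing sequence of downward-closed sets, and we may instead argue on the ascending chain $\setR_0\subseteq\setR_1\subseteq\cdots$. By the effectiveness-independent fact recalled in the preliminaries, \emph{any} non-decreasing sequence of upward-closed sets stabilizes; but here we do not even need the full wqo stabilization theorem, since the downward-finiteness of $\dclosure{D}$ already confines every $\complemnt{\setR_i}$ to the finite power set of the finite set $\dclosure{D}$ (more precisely, $\complemnt{\setR_i}$ is determined by its finitely many minimal elements, all of which live in $\dclosure{D}$). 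A non-increasing sequence drawn from a finite collection must become constant: there is $K$ with $\complemnt{\setR_K} = \complemnt{\setR_{K+1}}$, i.e., $\setR_K = \setR_{K+1}$.

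Concretely I would carry out the following steps. First, observe that $\dclosure{D}$ is finite: it equals $\bigcup_{d\in D}\dclosure{d}$, a finite union of finite sets by downward-finiteness. Second, note that since $r_{i,j}\in\dclosure{D}$ for all $i,j$, the set $\uclosure{\{r_{i,1},\ldots,r_{i,m_i}\}}$ is completely determined by the set $M_i := \{r_{i,1},\ldots,r_{i,m_i}\}\cap(\text{one fixed minimal-element system})$; more simply, $\complemnt{\setR_i}$ is an upward-closed set whose set of minimal elements is contained in $\dclosure{D}$. Third, since $\setR_i\subseteq\setR_{i+1}$ we have $\complemnt{\setR_{i+1}}\subseteq\complemnt{\setR_i}$, so the minimal-element sets can only shrink (in the sense that $\uclosure{\min\complemnt{\setR_{i+1}}}\subseteq\uclosure{\min\complemnt{\setR_i}}$) while staying inside the finite set $\dclosure{D}$. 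Fourth, because there are only finitely many upward-closed subsets of the finite set $\dclosure{D}$ — indeed only finitely many subsets of $\dclosure{D}$ at all — the non-increasing sequence $(\complemnt{\setR_i})_{i\ge 0}$ takes finitely many values and hence is eventually constant; pick $K$ to be the first index where $\complemnt{\setR_K}=\complemnt{\setR_{K+1}}$, equivalently $\setR_K=\setR_{K+1}$.

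I do not anticipate a serious obstacle; the only subtlety worth stating carefully is why the generators $r_{i,j}$ lie in $\dclosure{D}$ — this is part of the hypothesis of the lemma, but in the application one must trace through $\rConflict$ and $\rInduction$ to see that every element ever subtracted from some $\setR_k$ is of the form $\gen_i(\cdot)$ applied to an element of $\bigcup_i D_i$, and that $\gen$ produces only elements $\wqole$ the input, hence in the downward closure of $\bigcup_i D_i$, which is finite by the earlier remark that $\bigcup_i D_i$ is finite. Within the statement of the lemma itself, though, this is given, so the proof reduces to the pigeonhole argument above: a descending sequence inside a finite lattice stabilizes.
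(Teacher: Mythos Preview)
Your argument is correct and follows essentially the same route as the paper: since $D$ is finite and the wqo is downward-finite, $\dclosure{D}$ is finite, so there are only finitely many possible sets of the form $\Sigma\setminus\uclosure{S}$ with $S\subseteq\dclosure{D}$, and a monotone sequence through a finite family must stabilize. The paper compresses this into two sentences; your elaboration (passing to complements, tracking minimal elements, and the closing paragraph about how the hypothesis $r_{i,j}\in\dclosure{D}$ is established in the application) is accurate but not needed for the lemma itself.
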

% There is a number $K$ such that for every vector
% $\vecR = (\setR_0, \ldots, \setR_N)$
% satisfying these three conditions:
% 	\begin{enumerate}
% 	    \item For all $1 \le i \le N$, there is an $m_i$ and
% 		$r_{i,1}, \ldots, r_{i,m_i} \in \dclosure{\bigcup_{i \ge 0}D_i}$
% 		such that
% 		$\setR_i =
% 		\Sigma \setminus \uclosure{\{ r_{i,1}, \ldots, r_{i,m_i}\}}$,
% 	    \item $\setR_i \subseteq \setR_{i+1}$ for $0 \le i < N$,
% 	    \item $N \ge K+1$,
% 	\end{enumerate}
% there is an index $i \le K+1$ such that that $\setR_{i} = \setR_{i+1}$.
% \end{lemma}

\begin{proof}
By downward-finiteness, $\dclosure{D}$ is finite.
Hence, there are only a finite number of different $\setR_i$s we can construct,
and the sequence must converge.
% 
% Recall that $\bigcup_{i\geq 0} D_i$ is finite, say it is the set $\set{d_1,\ldots, d_n}$. 
% 	By Lemma~\ref{prop:finite_di}, $\bigcup_{i \ge 0} D_i = \{ d_1, \ldots, d_n \}$
% 	and $d_1, \ldots, d_n \in \Sigma$.
% 	Hence $r_{i,j} \in \bigcup_{i=1}^n \dclosure{d_i}$ for all
% 	$1 \le i \le N, 1 \le j \le m_i$. The system is downward-finite and all $\dclosure{d_i}$ are finite sets, so
% 	all the $r_{i,j}$ are drawn from a finite set. Thus, there is only a finite number of ways to build sets
% 	$\Sigma \setminus \uclosure{\{r_{i,1},\ldots, r_{i,M_i}\}}$. We denote this number by $K$.
% 
% 	If $N > K$, by condition 1, there are two sets $\setR_{i_1} = \setR_{i_2}$ for some $i_1 < i_2$.
% Using condition 2, it follows that $\setR_{i_1} = \setR_{i_1+1}$.
\qed
\end{proof}

\begin{theorem}{[Termination when $\cover \subseteq \setP$]}
For a given downward-finite WSTS $(\Sigma, I, \to, \wqole)$ and a downward-closed set
$\setP$, if $\cover \subseteq \setP$ and the rule $\rValid$ is applied whenever possible,
% with a higher priority than $\rUnfold$,
then the algorithm terminates and all maximal
	execution sequences have the form $\init \goesto^{*} \resValid$.
\label{thm:termination-uncoverable}
    \end{theorem}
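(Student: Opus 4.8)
The plan is to combine Proposition~\ref{prop:inner-loop-terminates} with Lemma~\ref{lem:termination} to force the rule $\rValid$ to become applicable after finitely many steps. First I would argue by contradiction: suppose some maximal execution $\init \goesto \sigma_1 \goesto \sigma_2 \goesto \cdots$ is infinite. By Proposition~\ref{prop:inner-loop-terminates}, the rule $\rUnfold$ is applied infinitely often, so the length $N$ of the vector $\vecR$ grows unboundedly. In particular, for every fixed index $i$, after the step where $\rUnfold$ raises $N$ to $i$, the set $\setR_i$ exists and thereafter can only shrink (by $\rConflict$ and $\rInduction$, which are the only rules that modify existing components), while invariant~\eqref{eq:I3} guarantees $\setR_i \subseteq \setR_{i+1}$ at every state.

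The key point is to identify the shape of the sets $\setR_i$. When $\setR_i$ is first created by $\rUnfold$ it equals $\Sigma$; every subsequent modification removes a set of the form $\uclosure{b}$ where $b$ is a generalization produced either in $\rConflict$ (so $b \in \gen_{i-1}(a)$ with $\langle a, i\rangle$ the minimal queue element) or in $\rInduction$ (so $b \in \gen_i(r_{i,j})$). In the \emph{restricted} algorithm the queue elements $\langle a, i\rangle$ are introduced only by $\rCandidate$ and $\rDecide$, which force $a \in D_0$ respectively $a \in D_{N-i+1}$; hence every such $a$ lies in $\bigcup_{j\geq 0} D_j$, and since $b \wqole a$ we get $b \in \dclosure{(\bigcup_j D_j)}$. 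By the discussion preceding Fig.~\ref{fig:rules-restricted}, $D := \bigcup_{j\geq 0} D_j$ is finite (using $\setU_L = \setU_{L+1}$, which holds by the stabilization of the \eqref{abdulla-seq} chain). Therefore at every reachable state and for every valid index $i$, $\setR_i = \Sigma \setminus \uclosure{\{r_{i,1},\ldots,r_{i,m_i}\}}$ with all $r_{i,j} \in \dclosure{D}$ — exactly the hypothesis of Lemma~\ref{lem:termination}.

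Now fix a point in the infinite run after which $N \geq 2$, and track the pair $(\setR_1, \setR_2)$. Both are of the form required by Lemma~\ref{lem:termination} with the \emph{same} finite $D$, so the monotone (shrinking) sequence of possible values each can take is finite; consequently there is a state in the run at which $\setR_1 = \setR_2$. More carefully: consider the sequence of states along the infinite run in which the vector has length $\geq 2$; the values of $\setR_1$ range over a finite set and are non-increasing, so $\setR_1$ eventually stabilizes, and then the same applies to $\setR_2$, so eventually $\setR_1 = \setR_2$ at some reachable state (indeed Lemma~\ref{lem:termination} applied to the chain $\Sigma \supseteq \cdots$, read as an increasing chain of complements, gives the stabilization directly). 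At that state the premise of $\rValid$ is met with $i = 1 < N$. By assumption~($\dagger$), $\rValid$ is then applied and the run terminates — contradicting infiniteness. Hence every maximal execution is finite, and by Proposition~\ref{cor:max-finite-seqs} it ends in $\resValid$ or $\resInvalid$; Theorem~\ref{thm:termination-reachable} (soundness of uncoverability) together with the hypothesis $\cover \subseteq \setP$ rules out $\resInvalid$, so every maximal execution has the form $\init \goesto^{*} \resValid$.

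The main obstacle I expect is the bookkeeping in the second paragraph: verifying rigorously that in the restricted algorithm \emph{every} element ever pushed onto $Q$ has its state component in $\dclosure{D}$, hence that every generalization $b$ removed from some $\setR_i$ lies in $\dclosure{D}$. This requires an auxiliary invariant — proved by induction on the run, checking the rules $\rCandidate$, $\rDecide$, $\rConflict$, $\rInduction$ — stating that every $\langle a, i\rangle \in Q$ satisfies $a \in D_{N-i}$ (or at least $a \in \bigcup_j D_j$), and that every $r_{i,j}$ with $\Sigma \setminus \uclosure{r_{i,j}} \supseteq \setR_i$ satisfies $r_{i,j} \in \dclosure{D}$. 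Once this invariant is in place, the appeal to Lemma~\ref{lem:termination} is immediate and the rest is routine.
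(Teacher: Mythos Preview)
Your overall strategy matches the paper's: reduce to finitely many $\rUnfold$ applications via Proposition~\ref{prop:inner-loop-terminates}, establish that every $\setR_i$ has the form $\Sigma \setminus \uclosure{\{r_{i,1},\ldots,r_{i,m_i}\}}$ with each $r_{i,j}\in\dclosure D$ for the finite set $D=\bigcup_j D_j$, and then invoke Lemma~\ref{lem:termination}. The auxiliary invariant you flag as the ``main obstacle'' in your last paragraph is exactly what the paper proves (in the appendix) and uses in the same way, so that part of your plan is sound.

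However, the step where you conclude that eventually $\setR_1=\setR_2$ has a genuine gap. You argue that $\setR_1$ and $\setR_2$ each range over a finite set of possible values and are non-increasing along the run, hence each stabilizes. That is correct, but it does \emph{not} follow that they stabilize to the \emph{same} set: $\setR_1$ may settle at some $A$ and $\setR_2$ at some $B$ with $A\subsetneq B$, and then $\rValid$ never becomes applicable with $i=1$. Tracking only the pair $(\setR_1,\setR_2)$ over time cannot force equality; your parenthetical appeal to Lemma~\ref{lem:termination} on ``the chain $\Sigma\supseteq\cdots$'' only re-derives individual stabilization, not coincidence of the limits.

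The paper's argument runs \emph{spatially} rather than temporally. Since $\dclosure D$ is finite, there are only finitely many sets of the form $\Sigma\setminus\uclosure S$ with $S\subseteq\dclosure D$; call this number $K$. After $K$ applications of $\rUnfold$, the current vector satisfies $\setR_1\subseteq\setR_2\subseteq\cdots\subseteq\setR_N$ with $N\geq K$, and by pigeonhole two consecutive terms of this chain coincide, i.e.\ $\setR_i=\setR_{i+1}$ for some $i<N$. At that instant $\rValid$ is applicable and, by assumption~$(\dagger)$, is applied. Your argument is easily repaired by replacing the temporal tracking of $(\setR_1,\setR_2)$ with this pigeonhole step on the spatial chain at a single sufficiently late state.
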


    \begin{proof}
	Consider any execution sequence
	$\init \goesto \sigma_1 \goesto \sigma_2 \goesto \cdots$.
	To show that it is finite, by Proposition~\ref{prop:inner-loop-terminates}, it is sufficient to show
	that there are only finitely many $i$ such that
	$\sigma_i \goesto \sigma_{i+1}$ via rule $\rUnfold$. 
Note that every time $\rUnfold$ is applied, the length of the sequence $\vecR$ goes up.
Consider the bound $K$ obtained by
applying Lemma~\ref{lem:termination} to the finite set $\bigcup_{i\geq 0} D_i$.
After $K$ applications of $\rUnfold$, by Lemma~\ref{lem:termination}, the $\rValid$ rule
applies.
Since $\rValid$ is taken whenever it is applied, the sequence must terminate.
By soundness, it must terminate in $\resValid$. 
% 
% If there are less than $K+1$ such $i$, with $K$ defined as in
% 	Lemma~\ref{lem:termination}, then there are  $R_{i_1} = R_{i_1+1}$, the rule \rValid{} is enabled and the algorithm terminates. 
% 
% Thus, let us assume that there
% are at least $K+1$ distinct values of $i$ such that
% $\sigma_i \goesto \sigma_{i+1}$ via $\rUnfold$. As the length of vector $\vecR$ strictly increases with every application of the $\rUnfold$ rule, there is a $j$ such that for all
% 	$i > j$, the length of $\vecR_i$ is greater than $K$.
% 	By Lemma~\ref{lem:termination},
% 	this implies that for every $i > j$, rule \rValid{} can be applied
% 	to $\sigma_i$. As our assumption is that the rule \rValid{} is applied with a higher priority than $\rUnfold$, 
% this means that for all $i > j$,
% 	rule $\rUnfold$ cannot be applied in $\sigma_i$. Thus, there are at
% 	most $K+1$ values of $i$ such that $\sigma_i \goesto \sigma_{i+1}$ via
% 	$\rUnfold$.
% 
% The proof that a maximal
% execution sequence has the form $\init \goesto^{*} \resValid$ is symmetrical to the proof of Theorem~\ref{thm:termination-reachable}.
\qed
\end{proof}

Note that Theorem~\ref{thm:termination-uncoverable} is the only result that requires
downward-finiteness of the WSTS.
We show that the downward-finiteness condition is necessary.
Consider a WSTS $(\NN \cup \set{\omega}, \set{0}, \to, \leq)$, where $x\to x+1$ for each $x\in\NN$ and $\omega \to\omega$,
and $\leq$ is the natural order on $\NN$ extended with $x \leq \omega$ for all $x\in\NN$.
Consider the downward closed set $\NN$.
The backward analysis terminates in one step, since $\pre{\omega} = \set{\omega}$.
However, the IC3 algorithm need not terminate.
After unfolding, we find a conflict since $\pre{\omega} = \set{\omega}$, which is not initial.
Generalizing, we get $R_1 = \set{0,1}$.
At this point, we unfold again.
We find another conflict, and generalize to $R_2 = \set{0,1, 2}$.
We continue this way to generate an infinite sequence of steps without terminating.

\section{Coverability for Petri Nets}

Petri nets \cite{EsparzaNielsen94} are a widely used model for
concurrent systems.  They form a downward-finite class of WSTS.  We
now describe an implementation of our algorithm for the coverability
problem for Petri nets.

\subsection{Petri Nets}

A Petri net (PN, for short) is a tuple $(S, T, W)$, where $S$ is a
finite set of \emph{places}, $T$ is a finite set of \emph{transitions}
disjoint from $S$, and $W: (S\times T) \cup (T\times S) \rightarrow
\NN$ is the arc multiplicity function.

The semantics of a PN is given using \emph{markings}.  A marking is a
function from $S$ to $\NN$.  For a marking $m$ and place $s\in S$, we
say $s$ has $m(s)$ tokens.

A transition $t \in T$ is \emph{enabled} at marking $m$, written
$m\fire{t}$, if $m(s) \geq W(s,t)$ for all $s\in S$.  A transition $t$
that is enabled at $m$ can fire, yielding a new marking $m'$ such that
$m'(s) = m(s) - W(s,t) + W(t,s)$.  We write $m \fire{t} m'$ to denote
the transition from $m$ to $m'$ on firing $t$.

A PN $(S,T,W)$ and an initial marking $m_0$ give rise to a WSTS
$(\Sigma, \set{m_0}, \to, \wqole)$ as follows.  The set of states
$\Sigma$ is the set of markings.  There is a single initial state
$m_0$.  There is an edge $m \to m'$ if there is some transition $t\in
T$ such that $m\fire{t} m'$.  Finally, $m \wqole m'$ if for each $s\in
S$, we have $m(s) \leq m'(s)$.  It is easy to check that the
compatibility condition holds: if $m_1\fire{t} m_2$ and $m_1 \wqole
m_1'$, then there is a marking $m_2'$ such that $m_1' \fire{t} m_2'$
and $m_2 \wqole m_2'$.  Moreover, the wqo is downward-finite.  The
coverability problem for PNs is defined as the coverability problem on
this WSTS.

We represent Petri nets as follows.
Let $S = \{s_1,\ldots, s_n\}$ be the set of places. 
A marking $m$ is represented as the tuple of natural numbers 
$(m(s_1),\ldots, m(s_n))$. 
% and the set of all markings is a subset of $\NN^n$.  
A transition $t$ is represented as a pair $(\bfg,\bfd)\in\en^n\times\zed^n$,
where $\bfg$ represents the enabling condition, and
$\bfd$ represents the difference between the number of tokens in
a place if the transition fires, and the current number of tokens. 
Formally, $(\bfg,\bfd)$ is defined as:
\begin{align*}
  \bfg & = (W(s_1, t), \ldots, W(s_n, t)) \\
  \bfd & = (W(t, s_1) - W(s_1, t), \ldots, W(t, s_n) - W(s_n, t))\,.
\end{align*}
We represent upward-closed sets with their minimal bases, 
which are finite sets of $n$-tuples of natural numbers. 
A downward-closed set is represented as its complement (which is an upward-closed set). 
The sets $\setR_i$, which are constructed
during the algorithm run, are therefore represented as their
complements. 
Such a representation comes naturally as the algorithm executes. 
Originally each set $\setR_i$ is initialized to contain all the states. 
The algorithm removes sets of states of the form $\uclosure{\bfb}$
from $\setR_i$, for some $\bfb\in \NN^n$. If a set $\uclosure{\bfb}$
was removed from $\setR_i$, we say that states in $\uclosure{\bfb}$
are \emph{blocked} by $\bfb$ at level $i$.
At the end every $\setR_i$ becomes to a
set of the form $\Sigma \setminus \uclosure{\{\bfb_1, \ldots, \bfb_l\}}$ and
we conceptually represent $\setR_i$ with $\{\bfb_1, \ldots, \bfb_l\}$.

The implementation uses a succinct representation of $\vecR$, so called
\emph{delta-encoding} \cite{Een2011}. 
Let $\setR_i = \Sigma \setminus \uclosure{B_i}$ and $\setR_{i+1} = \Sigma
\setminus \uclosure{B_{i+1}}$ for some finite sets $B_i$ and $B_{i+1}$. 
Applying the invariant \eqref{eq:I3} yields $B_{i+1} \subseteq B_i$. 
Therefore we only need to maintain a vector 
$\vecF=(F_0,\ldots,F_N, F_{\infty})$ such that
$\bfb \in F_i$ if $i$ is the highest level where $\bfb$ was blocked. 
This
is sufficient because $\bfb$ is also blocked on all lower levels. As an
illustration, for $(\setR_0, \setR_1, \setR_2)= (\{\bfi_1,\bfi_2\},\{\bfb_1,\bfb_2,\bfb_3,
\bfb_4\}, \{\bfb_2,\bfb_3\})$, the matching vector $\vecF$ might be
$(F_0,F_1, F_2,F_\infty)=(\{\bfi_1,\bfi_2\},\{\bfb_1,\bfb_4\},\{\bfb_2,\bfb_3\},\emptyset)$.
The set $F_{\infty}$ represents states that can never be reached.

\subsection{Implementation Details and Optimizations}

Our implementation follows the rules given in
Figures \ref{fig:rules-generic} and \ref{fig:rules-restricted}. In addition, we
use optimizations from \cite{Een2011}.
The main difference between our implementation and \cite{Een2011}
is in the interpretation of sets being blocked: in \cite{Een2011}
those are cubes identified with partial assignments to boolean variables, whereas in
our case those are upward-closed sets generated by a single state.
Also, a straightforward adaptation of the implementation \cite{Een2011}
would replace a SAT solver with a solver for integer difference logic, a fragment of linear
integer arithmetic which allows the most natural encoding of Petri nets. However, we observed
that Petri nets allow an easy and efficient way of computing predecessors and deciding relative
inductiveness directly. Thus we were able to eliminate the overhead of calling the SMT solver.

\noindent\emph{Testing membership in $\setR_i$.} Many of the rules 
given in Figures~\ref{fig:rules-generic} and \ref{fig:rules-restricted} depend on
    testing whether some state $\bfa$ is contained in a set $\setR_k$. Using
    the delta-encoded vector $\vecF$ this can be done by iterating over $F_i$ for $k\leq i\leq N+1$ and
    checking if any of them contains a state $\bfc$ such that $\bfc\wqole\bfa$. If there is such a state,
    it blocks $\bfa$, otherwise $\bfa\in\setR_k$. If $k=0$, we search for $\bfc$ only in $F_0$.
    
\noindent\emph{Implementation of the rules.} The delta-encoded representation $\vecF$ also makes \rValid{} easy to implement. Checking if
    $\setR_i=\setR_{i+1}$ reduces to checking if $F_i$ is empty for some $i<N$.     
    \rUnfold{} is applied when \rCandidate{} can no longer yield a bad state contained in $\setR_N$.
    It increases $N$ and inserts an empty set to position $N$ in the vector $\vecF$, thus pushing
    $F_\infty$ from position $N$ to $N+1$. We implemented rules \rInitialize{}, \rCandidate{},
    \rModelSyn{} and \rModelSem{} in a straightforward manner.
    
\noindent\emph{Computing predecessors.} In the rest of the rules we need to find predecessors 
$\pre{\uclosure{\bfa}}$ in $\setR_i\setminus\uclosure{\bfa}$,
    or conclude relative inductiveness if no such predecessors exist. The implementation in \cite{Een2011}
    achieves this by using a function \emph{solveRelative()} which invokes the SAT solver.
    But \emph{solveRelative()} also does two important improvements. In case the SAT solver finds a cube of predecessors,
    it applies \emph{ternary simulation} to expand it further. If the SAT solver concludes
    relative inductiveness, it extracts information to conclude a generalized clause is inductive relative to some level
    $k\geq i$. We succeeded to achieve analogous effects in case of Petri nets by the following observations. 
    While it is unclear what ternary simulation would correspond to for Petri nets, the following lemma shows how to
    compute the most general predecessor along a fixed transition directly. 
    
    \begin{lemma}\label{lm:impl-pred}
        Let $\bfa\in\en^n$ be a state and $t=(\bfg,\bfd)\in\en^n\times\zed^n$ be a transition. Then $\bfb\in\pre{\uclosure{\bfa}}$
        is a predecessor along $t$ if and only if $\bfb\wqoge\max(\bfa-\bfd,\bfg)$.
    \end{lemma}
    
    Therefore, to find an element of $\pre{\uclosure{\bfa}}$ and $\setR_i\setminus\uclosure{\bfa}$, we iterate through
    all transitions $t=(\bfg,\bfd)$ and find the one for which $\max(\bfa-\bfd,\bfg)\in\setR_i\setminus\uclosure{\bfa}$.

    If there are no such transitions, then $\Sigma\setminus\uclosure{\bfa}$ is inductive relative to $\setR_i$.
    In that case, for each transition $t=(\bfg,\bfd)$ the predecessor $\max(\bfa-\bfd,\bfg)$ is either
    blocked by $\bfa$ itself, or there is $i_t\geq i$ and a state $\bfc_t\in F_{i_t}$
    such that $\bfc_t\wqole\max(\bfa-\bfd,\bfg)$. We define
      \[ i':=\min\{i_t\ |\ t\text{ is a transition}\}\,, \]
    where $i_t:=N+1$ for $t=(\bfg,\bfd)$ if $\max(\bfa-\bfd,\bfg)$ is blocked by $\bfa$ itself. Then
    $i'\geq i$ and $\Sigma\setminus\uclosure{\bfa}$ is inductive relative to $\setR_{i'}$.
    
\noindent\emph{Computing generalizations.}  The following lemma shows that we can 
also significantly generalize $\bfa$, i.e. there is a simple way
    to compute a state $\bfa'\wqole\bfa$ such that for all transitions $t=(\bfd,\bfg)$, $\max(\bfa'-\bfd,\bfg)$
    remains blocked either by $\bfa'$ itself, or by $\bfc_t$.
    
    \begin{lemma}\label{lem:generalization-c}
        Let $\bfa,\bfc\in\NN^n$ be states and $t=(\bfg,\bfd)\in\NN^n\times\zed^n$ be a transition.
\begin{enumerate}
\item Let $\bfc\wqole\max(\bfa-\bfd,\bfg)$. Define $\bfa''\in\NN^n$ by $a''_j := c_j + d_j$ if $g_j < c_j$
     and $a''_j := 0$ if $g_j \geq c_j$, for $j=1,\ldots, n$.
        % \begin{equation}
        %     \label{eqn:generalization}
        %     a''_j :=
        %     \begin{cases}
        %         c_j + d_j, & g_j<c_j \\
        %         0, & g_j\geq c_j\,,
        %     \end{cases}
        % \end{equation}
        Then $\bfa''\wqole\bfa$. Additionally, for each $\bfa'\in\en^n$ such that
        $\bfa''\wqole\bfa'\wqole\bfa$, we have $\bfc\wqole\max(\bfa'-\bfd,\bfg)$.
\item
        If $\bfa\wqole\max(\bfa-\bfd,\bfg)$, then for each $\bfa'\in\en^n$ such that
        $\bfa'\wqole\bfa$, it holds that $\bfa'\wqole\max(\bfa'-\bfd,\bfg)$.
\end{enumerate}
    \end{lemma}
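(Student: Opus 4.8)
The plan is to reduce everything to coordinatewise reasoning. Since $\wqole$ is the pointwise order on $\en^n$ and both $\max$ and the difference $\bfa-\bfd$ are computed componentwise (recall from Lemma~\ref{lm:impl-pred} that $\max(\bfa-\bfd,\bfg)$ is the minimal predecessor along $t$), each vector inequality in the statement is equivalent to $n$ independent scalar inequalities. I would fix a coordinate $j\in\{1,\dots,n\}$ throughout and argue by a two-way case split, in part~1 on the sign of $g_j-c_j$ and in part~2 on which argument realises the maximum.

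For part~1, I would first check that $\bfa''$ genuinely lies in $\en^n$: when $g_j\geq c_j$ this is trivial since $a''_j=0$, and when $g_j<c_j$ I would use that for a Petri-net transition $g_j+d_j=W(t,s_j)\geq 0$, so $a''_j=c_j+d_j=(c_j-g_j)+(g_j+d_j)>0$. Next, for $\bfa''\wqole\bfa$: the hypothesis $\bfc\wqole\max(\bfa-\bfd,\bfg)$ gives $c_j\leq\max(a_j-d_j,g_j)$ for every $j$; in the case $g_j<c_j$ the maximum cannot be attained at $g_j$, so $c_j\leq a_j-d_j$, i.e.\ $a''_j=c_j+d_j\leq a_j$, and in the case $g_j\geq c_j$ we have $a''_j=0\leq a_j$. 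Finally, given any $\bfa'$ with $\bfa''\wqole\bfa'\wqole\bfa$, I would verify $c_j\leq\max(a'_j-d_j,g_j)$: if $g_j\geq c_j$ the right-hand side already dominates $g_j\geq c_j$; if $g_j<c_j$, then $a'_j\geq a''_j=c_j+d_j$ yields $a'_j-d_j\geq c_j$, and again the maximum is at least $c_j$. This gives $\bfc\wqole\max(\bfa'-\bfd,\bfg)$.

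For part~2, the hypothesis $\bfa\wqole\max(\bfa-\bfd,\bfg)$ says that for each $j$ either $a_j\leq a_j-d_j$ (equivalently $d_j\leq 0$) or $a_j\leq g_j$. Fixing $\bfa'\wqole\bfa$: in the first subcase $a'_j\leq a'_j-d_j\leq\max(a'_j-d_j,g_j)$, and in the second subcase $a'_j\leq a_j\leq g_j\leq\max(a'_j-d_j,g_j)$. Hence $\bfa'\wqole\max(\bfa'-\bfd,\bfg)$.

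I do not expect a serious obstacle: every step is an elementary scalar inequality once we are coordinatewise, and the whole argument is essentially bookkeeping. The one point that requires a moment's attention is the well-definedness of $\bfa''$ in part~1, where one must invoke the Petri-net constraint $\bfg+\bfd\geq 0$ rather than treating $\bfd$ as a completely arbitrary element of $\zed^n$; I would flag this dependence explicitly in the proof.
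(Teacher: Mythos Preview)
Your proposal is correct and follows essentially the same coordinatewise case analysis as the paper's proof. The one notable addition is your explicit well-definedness check that $\bfa''\in\NN^n$ via the Petri-net constraint $g_j+d_j=W(t,s_j)\geq 0$; the paper's proof silently assumes this, so your flag is a genuine improvement in rigor.
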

    
    To continue with the case when the predecessor $\max(\bfa-\bfd,\bfg)$ is blocked
    for each transition $t=(\bfd,\bfg)$, we define $\bfa''_t$ as in Lemma~\ref{lem:generalization-c} (1) if
    the predecessor is blocked by some state $\bfc_t\in F_{i_t}$ and $\bfa''_t:=(0,\ldots,0)$ if it is blocked
    by $\bfa$ itself. 
    The state $\bfa''$ is defined to be the pointwise maximum of all states $\bfa''_t$.    
    By Lemma \ref{lem:generalization-c}, predecessors of $\bfa''$ remain blocked
    by the same states $\bfc_{t}$ or by $\bfa''$ itself.
    
    However, $\bfa''$ still does not have to be a valid generalization, because it might be in $\setR_0$.
    If that is the case, we take any state $\bfc\in F_0$ which blocks $\bfa$ (such a state exists because
    $\bfa\notin\setR_0$). Then $\bfa':=\max(\bfa'',\bfc)$ is a valid generalization: $\bfa'\wqole\bfa$
    and $\Sigma\setminus\uclosure{\bfa'}$ is inductive relative to $\setR_{i'}$.
    
    Using this technique, rules \rDecide{}, \rConflict{} and \rInduction{} become easy to implement. Note that
    some additional handling is needed in rules \rConflict{} and \rInduction{} when blocking a generalized upward-closed
    set $\uclosure{\bfa'}$. If $\Sigma\setminus\uclosure{\bfa'}$ is inductive relative to $\setR_{i'}$ for $i'<N$,
    we update the vector $\vecF$ by adding $\bfa'$ to $F_{i'+1}$. However, if $i'=N$ or $i'=N+1$, we add $\bfa'$
    to $F_{i'}$. Additionaly, for $1\leq k\leq i'+1$ (or $1\leq k\leq i'$) we remove all states $\bfc\in F_k$ such
    that $\bfa'\wqole\bfc$.
    
    One of the optimizations from \cite{Een2011} showed a significant improvement in running time. After
    using the \rConflict{} rule, if $i'+1<N$ and a set $\uclosure{\bfa}$ was blocked from $\setR_{i'+1}$
    by adding a generalization $\bfa'$ to $F_{i'+1}$, we add $\langle\bfa,i'+2\rangle$ to the priority
    queue. This way we do not discard the state which we know leads outside $\setP$, but add an obligation
    to check if its upward-closure can be reached in $i'+2$ steps. The effect is that traces much longer
    than $N$ are checked.

\section{Experimental Evaluation}\label{sec:experiments}

We have implemented the IC3 algorithm in a tool called IIC.
Our tool is written in C++ and uses the input format of mist2.
We evaluated the efficiency of the algorithm on a collection of Petri net examples.
The goal of the evaluation was to compare the performance ---both time and space usage---
of IIC against other implementations of Petri net coverability.
%
% Thus, the goal of the measurement is to show that IIC solves a significant
% amount of problem instances while using less time and space for most of these
% instances, compared to other algorithms.

\begin{table}[t]
    \def\theader#1{\multicolumn{2}{|c|}{#1}}
    \scriptsize
    \begin{tabular}{|l|rr|rr|rr|rr|}
	\hline
	Problem&\theader{IIC}&\theader{Backward}&\theader{EEC}&\theader{MCOV}\\
	Instance&Time&Mem&Time&Mem&Time&Mem&Time&Mem\\
	\hline
	\multicolumn9{|c|}{Uncoverable instances}\\
	\hline
	% Bingham ($h=50$)&$\mathbf{<0.1}$&\bf 6.6&9.4&46.2&0.1&21.3&0.1&20.1\\
	Bingham ($h=150$)&$\bf 0.1$&\bf 3.5&970.3&146.3&1.8&19.0&\bf 0.1&7.6$^{2c}$\\
	Bingham ($h=250$)&\bf 0.2&\bf 6.7&\theader{Timeout}&9.6&45.4&\bf 0.2&19.6$^{2c}$\\
	Ext. ReadWrite (small consts)&\bf 0.0&\bf 1.3&0.1&3.7&\theader{Timeout}&\theader{Timeout/OOM}\\
	Ext. ReadWrite&\bf 0.3&\bf 1.5&216.3&34.1&\theader{Timeout}&0.6&4.1$^{2b}$\\
	FMS (old)&$\mathbf{<0.1}$&\bf 1.3&1.3&5.5&\theader{Timeout}&0.1&5.8$^{2c}$\\
	Mesh2x2&$\mathbf{<0.1}$&\bf 1.3&0.3&3.9&266.9&24.3&$\mathbf{<0.1}$&4.2$^{1c}$\\
	Mesh3x2&$\mathbf{<0.1}$&\bf 1.5&4.1&7.0&\theader{Timeout}&$\mathbf{<0.1}$&2.0$^{2b}$\\
	Multipoll&1.5&\bf 1.6&0.5&4.3&21.8&7.1&$\mathbf{<0.1}$&1.7$^{2b}$\\
	MedAA1&\bf 0.5&\bf 173.3&8.8&598.8&&&3.7&210.4$^{2b}$\\
	MedAA2&\theader{Timeout}&\theader{Timeout}&\theader{}&\theader{Timeout/OOM}\\
	MedAA5&\theader{Timeout}&\theader{Timeout}&\theader{}&\theader{Timeout/OOM}\\
	MedAR1&\bf 0.8&\bf 173.3&8.77&598.8&\theader{}&3.7&210.4$^{2b}$\\
	MedAR2&33.2&\bf 173.3&15.7&599.4&&&\bf 13.7&210.4$^{2b}$\\
	MedAR5&128.1&\bf 173.3&26.6&600&\theader{}&\bf 12.9&210.4$^{2b}$\\
	MedHA1&\bf 0.8&\bf 173.3&8.9&598.7&\theader{}&5.5$^{2c}$&210.4$^{2b}$\\
	MedHA2&33.2&\bf 173.3&14.7&599.5&\theader{}&\bf 12.6&210.4$^{2b}$\\
	MedHA5&\theader{Timeout}&3219.7&647.3&\theader{}&12.5&210.4$^{2b}$\\
	MedHQ1&\bf 0.7&\bf 173.3&8.8&598.8&\theader{}&12.2&210.4$^{2b}$\\
	MedHQ2&33.8&\bf 173.3&16.6&596.9&&&\bf 13.2&210.4$^{2b}$\\
	MedHQ5&125.8&\bf 173.3&26.6&600&\theader{}&\bf 12.6&210.4$^{2b}$\\
	\hline
	\multicolumn9{|c|}{Coverable instances}\\
	\hline
	Kanban&$\mathbf{<0.1}$&\bf 1.4&804.7&55.1&\theader{Timeout}&0.1&6.0$^{2c}$\\
	pncsacover&2.8&\bf 2.2&7.9&11.2&36.5&8.8&\bf 1.0&23.0$^{1c}$\\
	pncsasemiliv&0.1&\bf 1.5&0.2&3.9&32.1&8.8&$\mathbf{<0.1}$&3.7$^{2c}$\\
	MedAA1-bug&\bf 0.8&\bf 172.7&1.0&596.9&56.5&658.0&3.6&210.4$^{2b}$\\
	MedHR2-bug&\bf 0.6&\bf 172.7&0.6&596.9&57.2&658.0&12.8&210.4$^{2b}$\\
	MedHQ2-bug&0.4&\bf 172.7&\bf 0.3&596.9&56.8&658.0&12.9&210.4$^{2b}$\\
	\hline
    \end{tabular}
    \caption{Experimental results: comparison of running time
	and memory consumption for different coverability algorithms on
	selected problem instances.
%The upper half of the table shows
%	problem instances where the coverability property does not hold, while
%	the lower half shows problem instances that are coverable.
	The memory consumption is given in megabytes, and the running
	time in seconds.
	In the mcov column, the superscripts indicate the version of bfc used
	($^1$ means the version Jan 2012 version, $^2$ the Feb 2013 version),
	and the analysis mode ($^c$: combined, $^b$: backward only, $^f$:
	forward only). We list the best result for all the version/parameter
	combinations that were tried.
        % We only show three MedXXX examples as the other examples were similar.
        % For the MCOV
	% column, we used the translations to bfc format provided in the
	% bfc distribution, where available.
    \label{tab:results-selected}
    }
\end{table}
We compare the performance of IIC, using our implementation described above,
to the following algorithms: EEC \cite{EEC} and backward search \cite{ACJT96},
as implemented by the tool mist2\footnote{See \url{http://software.imdea.org/~pierreganty/ist.html}},
and the MCOV algorithm \cite{wahlkroening} for parameterized multithreaded programs
as implemented by bfc\footnote{See \url{http://www.cprover.org/bfc/}}.
All experiments were performed on identical machines, each having
Intel Xeon 2.67 GHz CPUs and 48 GB of memory, running Linux 3.2.21 in 64 bit
mode. % For the mist2 benchmarks, 
Execution time was limited to 1 hours, and memory to five gigabytes.
% , and memory to two gigabytes. For the bfc benchmarks, a shorter time limit of 1.5 hours was imposed. 
% CPU and memory consumption was measured using the time
% command, accounting for user-space CPU time and wall-clock time, and
% maximal resident set size.

We used 29 Petri net examples from the mist2 distribution,
46 examples of multi-threaded programs from the bfc distribution,
and 
6 examples from checking security properties of 
message-passing programs communicating through unbounded unordered channels (MedXXX examples). 
% Additionally, we used 12 Petri nets provided by our colleague Zilong Wang.
%
We only present a selection of the data and focus on examples
that took longer than 2 second for at least one algorithm.
% We also elided some of the MedXXX examples, because the measurements were all very similar to each other.
All benchmarks are available at \url{http://www.mpi-sws.org/~jkloos/iic-experiments}.

\begin{table}[t]
    \def\theader#1{\multicolumn{2}{|c|}{#1}}
    \scriptsize
    \begin{tabular}{|l|rr|rr|}
	\hline
	Problem&\theader{IIC}&\theader{MCOV}\\
	Instance&Time&Mem&Time&Mem\\
	\hline
	\multicolumn{5}{|c|}{Uncoverable instances}\\
	\hline
	Conditionals 2&0.1&\bf 3.6&$\mathbf{<0.1}$&5.7$^{2c}$\\
	RandCAS 2&$\mathbf{<0.1}$&\bf 2.0&$\mathbf{<0.1}$&3.9$^{2c}$\\
	\hline
	\multicolumn{5}{|c|}{Coverable instances}\\
	\hline
%	Boop 1&$<0.1$&\bf 5.5&$<0.1$&13.7\\
	Boop 2&82.0&287.9&\bf 0.1&\bf 12.1$^{1c}$\\
	%Boop 2&82.0&1151.6&\bf 0.1&\bf 49.2\\
	FuncPtr3 1&$<0.1$&\bf 1.5&$<0.1$&3.4$^{2c}$\\
	FuncPtr3 2&0.2&\bf 12.3&0.1&7.9$^{2c}$\\
	FuncPtr3 3&28.5&939.1&\bf 3.6&\bf 303.8$^{1c}$\\
	%FuncPtr3 3&28.5&3756.2&\bf 9.7&\bf 1669.4$^{2c}$\\
	DoubleLock1 2&\theader{Timeout}&\bf 0.8&\bf 56.7$^{2c}$\\
	DoubleLock3 2&8.0&41.3&$\mathbf{<0.1}$&\bf 4.8$^{2c}$\\
	Lu-fig2 3&\theader{Timeout}&\bf 0.1&\bf 10.4$^{2c}$\\
	Peterson 2&\theader{Timeout}&\bf 0.2&\bf 23.0$^{1c}$\\
	Pthread5 3&132428&468.8&\bf 0.1&\bf 17.0$^{1c}$\\
	Pthread5 3&&&0.2&\bf 49.6$^{2c}$\\
	SimpleLoop 2&7.9&6.0&$\mathbf{<0.1}$&\bf 4.8$^{2c}$\\
	Spin2003 2&4852.2&54.4&$\mathbf{<0.1}$&\bf 2.7$^{2c}$\\
	StackCAS 2&2.5&\bf 1.6&$\mathbf{<0.1}$&3.7$^{2c}$\\
	StackCAS 3&5.5&21.7&$\mathbf{<0.1}$&\bf 4.4$^{2c}$\\
	Szymanski 2&\theader{Timeout}&\bf 0.4&\bf 26.7$^{2c}$\\
	\hline
    \end{tabular}
    \caption{Experimental results: comparison between
	MCOV and IIC on examples derived from
        parameterized multithreaded programs.
	In the mcov column, the superscripts indicate the version of bfc used
	($^1$ means the version Jan 2012 version, $^2$ the Feb 2013 version),
	and the analysis mode ($^c$: combined, $^b$: backward only, $^f$:
	forward only). We list the best result for all the version/parameter
	combinations that were tried.
    \label{tab:results-wk}}
\end{table}

\smallskip
\noindent
\emph{mist2 and MedXXX benchmarks}
Table~\ref{tab:results-selected} show run times and memory usage on the mist2 and message-passing
program benchmarks.
For each row, the column in bold shows the winner (time or space) for each instance.
It can be seen that IIC performs reasonably well on these benchmarks, both in time and in memory usage.
% \JK{Possible addition: Some further experiments show that the good 
% performance of the backward algorithm on the MedXXX examples stems from their use 
% of pre-compouted invariants. We conjecture that making using of the invariants in the IIC implementation would provide similar benefits.}

To account for mist2's use of a pooled memory, we estimated its baseline usage to 2.5 MB
by averaging over all examples that ran in less than 1 second.
% we checked whether mist2 uses significantly more
% memory than baseline usage (estimated by finding the average for examples
% that executed in less than one second). We found that baseline usage was around
% 10 megabytes, and the listed examples all required at least 5 megabytes more,
% rendering this measurement significant.

% Also, the memory usage includes the size of the program binary.
% To compensate, we created statically-linked versions of all
% binaries and checked the memory requirements of the code for each.
% It turns out that MCOV takes roughly 2 megabytes, IIC requires 1.2 megabytes and mist2
% 750 kilobytes.
% Again, this does not impact our measurements, since the differences
% between the memory requirements tend to be more significant.

\noindent
\emph{Multithreaded program benchmarks}
We also ran comparisons with MCOV on a set of multithreaded programs distributed with MCOV.
For Petri nets derived from C programs distributed with MCOV, Table~\ref{tab:results-wk}
shows that IIC performs well on the uncoverable examples but MCOV performs much
better on the coverable ones.
We do not fully understand the reasons for poor performance of IIC for the coverable instances.
% Part of the poor performance can be explained by the compilation process used to generate a Petri net
% from the MCOV input format.
% While MCOV can take advantage of the structure of the program (e.g., by distinguishing local and shared
% states, where shared states are 1-bounded),
% the compiled Petri net that is fed to IIC does not have this information and the algorithm
% explores many infeasible backward reachable states.

%In conclusion, IIC performs favorably on a large set of benchmarks. It will be interesting to see if 
%the generalization heuristic from \cite{wahlkroening} can be combined with IIC.
In conclusion, the unoptimized implementation of the IIC algorithm is already
working quite well in comparison to other existing implementations of
coverability algorithms. Nevertheless, it is obvious that significant further
work is required to optimize the algorithm. Two main directions that are being
considered are the use of invariants to prune the search space, and the
combination of the generalization heuristics from MCOV \cite{wahlkroening} with
IIC.

\smallskip
\noindent\textbf{Acknowledgements}
We thank Andreas Kaiser for pointing out an error regarding the encoding
of Petri nets into the bfc input format, leading to non-optimal performance
of the bfc tool, and for providing us with a correct conversion tool.

    \bibliographystyle{plain}
    \bibliography{references}

\appendix
\section{Soundness and termination proof}
\label{ap:proofs}

This appendix contains the proofs of lemmas used in the paper.

\begin{theorem-non}\textbf{\emph{\ref{lemma:rulesI14invs}}}
The rules
\rUnfold, \rInduction, \rConflict, \rCandidateUnres{}, and
\rDecideUnres{} preserve \eqref{eq:I1} -- \eqref{eq:I4},
\end{theorem-non}

\begin{proof}
If $\vecR|Q \goesto \vecRI|Q'$ by application of
\rCandidateUnres{} or \rDecideUnres, we have $\vecR = \vecRI$,
so \eqref{eq:I1} -- \eqref{eq:I4} are preserved trivially.
For \rUnfold, \eqref{eq:I1} -- \eqref{eq:I3} are trivial, and
\eqref{eq:I4} holds for $i < \len{\vecRI} - 1$ by \eqref{eq:I4} on
$\vecR$, and by the condition of \rUnfold{} for
$i = \len{\vecRI} - 1$.

Finally, the rules \rInduction{} and \rConflict{} require the following technical observation about $\operatorname{Gen}$.
\begin{verse}
\textbf{Claim:}
If $\vecR$ satisfies \eqref{eq:I1} -- \eqref{eq:I4} and
$b \in \operatorname{Gen}_m(a)$, then
$\vecR[\setR_i \gets \setR_i \setminus \uclosure{b}]_{i=1}^m$
satisfies \eqref{eq:I1} -- \eqref{eq:I4}.
\end{verse}
To prove the claims, we show the following:
	\begin{enumerate}
	    \item For $1 \le k \le i$,
		$I \subseteq \setR_k \setminus \uclosure{b}$
		(part of \eqref{eq:I1}).
	    \item $\post{\setR_{i-1} \setminus \uclosure{b}} \subseteq
		\setR_i \setminus \uclosure{a'}$.
		for $1 < i < m$ (Part of \eqref{eq:I2}).
	    \item $\post{\setR_0} \subseteq \setR_1 \setminus \uclosure{b}$.
		(\eqref{eq:I2}, case $i = 1$)
	    \item $\post{\setR_{m-1} \setminus \uclosure{b}} \subseteq \setR_m$
		(\eqref{eq:I2}, case $i = m$)
	\end{enumerate}
All other cases as well as \eqref{eq:I3} and \eqref{eq:I4} are trivial.

\begin{enumerate}
	    \item By the definition of $\operatorname{Gen}$, we have $\uclosure{b} \cap I =
		\varnothing$. Thus, since $I \subseteq \setR_i$ by
		\eqref{eq:I1}, $I \subseteq \setR_i \setminus \uclosure{b}$.
	    \item Let $i$ be given with $1 < i < m$, and
		$y \in \post{\setR_{i-1} \setminus \uclosure{b}}$.
		We need to show that $y \in \setR_i \setminus \uclosure{b}$.

		By choice of $y$, there is an
		$x \in \setR_{i-1} \setminus \uclosure{b}$ such that
		$x \to y$. By repeated application of \eqref{eq:I3}, we find
		that $x \in \setR_{m-1} \setminus \uclosure{b}$. Thus,
		$y \in \post{\setR_{m-1} \setminus \uclosure{b}}
		 \subseteq \Sigma \setminus \uclosure{b}$.

		Thus, $y \in \setR_i \setminus \uclosure{b}$.
	    \item Let $y' \in \post{\setR_0} = \post{\dclosure{I}}$.
		We need to show that $y' \in \setR_1 \setminus \uclosure{b}$.

		There is a $x' \in \dclosure{I}$ such that $x' \to y'$.
		Due to the choice of $x'$, there is an $x \in I$ with
		$x \wqoge x'$. By well-structuredness, there is also a $y$
		such that $x \to y$ and $y \wqoge y'$. Since $\setR_1$ is
		downward-closed, $y \in \setR_1$.

		By \eqref{eq:I3}, we find that
		$x \in \setR_1$, and by (1),
		$x \in \setR_1 \setminus \uclosure{b}$.
		Thus, by (2), $y \in \setR_2 \setminus \uclosure{b}$.
		But this implies $y \not\in \uclosure{b}$, so
		$y \in \setR_1 \setminus \uclosure{b}$. Since
		$\setR_1 \setminus \uclosure{b}$ is downward-closed,
		we hence have $y' \in \setR_1 \setminus \uclosure{b}$.
	    \item $\post{\setR_{m-1} \setminus \uclosure{b}}
		\subseteq \post{\setR_{m-1}} \subseteq \setR_m$ by
		\eqref{eq:I2}.
	\end{enumerate}
\qed
\end{proof}

The next lemma defines the structure of the priority queues
used in the algorithm.

\begin{theorem-non}\textbf{\emph{\ref{lem:Q-structure}}}
	Let $\init \goesto^{*} \vecR\mid Q$. 
  If $Q \neq \varnothing$, then for every $\langle a, i\rangle \in Q$,
    there is a path from $a$ to some $b \in \Sigma \setminus \setP$.
\end{theorem-non}

    \begin{proof}
    By induction on the application of rules. For the base case,
    the application of \rInitialize, the claim trivially holds.
    
    For the induction step, assume the claim holds for some sequence
    of rule applications such that $\init \goesto^{*} \vecR\mid Q$.
    We only need to consider \rCandidateUnres{} and \rDecideUnres{},
    since they are the only rules which add elements on $Q$.
    
    If \rCandidateUnres{} is applied, it will enqueue $\langle a,N\rangle$
    such that $a\in \setR_N\setminus\setP\subseteq\Sigma\setminus\setP$.
    If \rDecideUnres{} is applied, then $\min Q = \langle a,i \rangle$,
    $i>0$ and $\langle b',i-1\rangle$ such that $b'\in\pre{\uclosure{a}}$
    is enqueued. The latter implies there is $a'\wqoge a$ such that
    $b'\to a'$. By the induction hypothesis, there is a path
    $a\to^{*}b\in\Sigma\setminus\setP$, therefore by
    well-structuredness there is $b''\wqoge b$ such that $a'\to^{*}b''$. 
    Combining the facts we conclude $b'\to^{*}b''\in\Sigma\setminus\setP$.
    \qed
    \end{proof}

%    \begin{remark}[Various minor invariants]
%	\label{APP-rem:minor-invs}
%	The following invariants of the algorithm are easy to verify:
%	\begin{enumerate}
%	    \item If $\init \goesto \sigma_1 \goesto^{*} \sigma_n$,
%		we have that $\sigma_i$ is of the form $\vecR_i|Q_i$ for all
%		$1 \le i < n$,
%		and $\sigma_n$ is of one of three forms: $\vecR_n|Q_n$,
%		\resValid{} or \resInvalid.
%	    \item As a corollary to (1), if
%		$\init \goesto \sigma_1 \goesto \sigma_2 \goesto \cdots$
%		with an infinite sequence of $\sigma_i$, all $\sigma_i$ are
%		of the form $\vecR_i|Q_i$.
%
%		Conversely, if $\sigma_i = \resValid$ or
%		$\sigma_i = \resInvalid$, this is the last element of the
%		sequence, i.e., $n = i$.
%	    \item $\init \not\goesto \resValid$ and
%		$\init \not\goesto \resInvalid$.
%	    \item For every sequence
%		$\init \goesto \vecR_1|Q_1 \goesto^{*} \vecR_n|Q_n$,
%		$\operatorname{length}(\vecR_i) \le
%		\operatorname{length}(\vecR_{i+1})$. In particular,
%		when $\vecR_i|Q_i \goesto \vecR_{i+1}|Q_{i+1}$ via
%		\rUnfold, $\operatorname{length}(\vecR_i) <
%		\operatorname{length}(\vecR_{i+1})$.
%	    \item As a corollary to (4), when
%		$\init \goesto \vecR_1|Q_1 \goesto^{*} \vecR_n|Q_n$,
%		and there are $K$ different indices $i$ such that
%		$\vecR_i|Q_i \goesto \vecR_{i+1}|Q_{i+1}$ via the rule
%		\rUnfold, there is an index $j \le n$ such that for all
%		$i \ge j$, $\operatorname{length}(\vecR_i) \ge K$.
%	\end{enumerate}
%    \end{remark}

    \begin{lemma}[Disjointness of $\setR_i$ and $U_j$]
	\label{APP-lem:R_U_disjoint}
	When \eqref{eq:I1} -- \eqref{eq:I4} hold for $\vecR$,
	$R_{N-1-i} \cap U_i = \varnothing$ for $0 \le i < N - 1$.
    \end{lemma}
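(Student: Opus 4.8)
The plan is to prove the claim by induction on $i$, using the recursive definition $\setU_{i+1} = \setU_i \cup \pre{\setU_i}$ together with the invariants \eqref{eq:I2}--\eqref{eq:I4}; note that \eqref{eq:I1} is not needed.

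For the base case $i = 0$ we have $\setU_0 = \Sigma \setminus \setP$. Since $0 \le N - 1 < N$, invariant \eqref{eq:I4} gives $R_{N-1} \subseteq \setP$, and hence $R_{N-1} \cap \setU_0 = \varnothing$.

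For the inductive step, suppose $R_{N-1-i} \cap \setU_i = \varnothing$ for some $i$ with $0 \le i$ and $i + 1 < N - 1$, and consider $R_{N-2-i} \cap \setU_{i+1}$, where $N-2-i = N-1-(i+1) \ge 1$. Writing $\setU_{i+1} = \setU_i \cup \pre{\setU_i}$, it suffices to show both $R_{N-2-i} \cap \setU_i = \varnothing$ and $R_{N-2-i} \cap \pre{\setU_i} = \varnothing$. For the first, the index bounds $0 \le N-2-i < N$ let us apply \eqref{eq:I3} to get $R_{N-2-i} \subseteq R_{N-1-i}$, so emptiness follows from the induction hypothesis. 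For the second, if some $x \in R_{N-2-i} \cap \pre{\setU_i}$, pick $y \in \setU_i$ with $x \to y$; then $y \in \post{R_{N-2-i}}$, so by \eqref{eq:I2} we get $y \in R_{N-1-i}$, contradicting $R_{N-1-i} \cap \setU_i = \varnothing$. Hence $R_{N-2-i} \cap \setU_{i+1} = \varnothing$, which is exactly the statement for $i+1$.

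There is no genuine obstacle here; the only care required is to keep track of the index ranges so that each invocation of \eqref{eq:I2}, \eqref{eq:I3}, and \eqref{eq:I4} falls within the permitted range $0 \le j < N$.
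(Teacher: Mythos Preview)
Your proof is correct and follows essentially the same approach as the paper: induction on $i$, handling the base case via \eqref{eq:I4}, and in the inductive step splitting $\setU_{i+1} = \setU_i \cup \pre{\setU_i}$ and using \eqref{eq:I3} and \eqref{eq:I2} respectively. The only differences are cosmetic (you step from $i$ to $i{+}1$ while the paper steps from $i{-}1$ to $i$), and you are slightly more careful than the paper in tracking the index ranges and in observing that \eqref{eq:I1} is not used.
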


    \begin{proof}
	We prove the statement by induction over $i$.

	\begin{description}
	    \item[$i = 0$:] By \eqref{eq:I4},
	\[ R_{N-1-0} \cap U_0 = \underbrace{R_{N-1}}_{\subseteq \setP} \cap
	    (\Sigma \setminus \setP) = \varnothing. \]
	
	    \item[$i > 0$:] By induction, $R_{N-i} \cap U_{i-1} = \varnothing$.
		Now, let $x \in U_i$. Then by definition of $U_i$, there are
		two cases:
		\begin{description}
		    \item[$x \in U_{i-1}$:] Then $x \not\in \setR_{N-i}$.
			Since $\setR_{N-i-1} \subseteq R_{N-i}$ by
			\eqref{eq:I3}, $x \not\in \setR_{N-i-1}$.
		    \item[$x \in \pre{U_{i-1}}$:] Then there is a
			$y \in U_{i-1}$	such that $x \to y$. In particular,
			$y \in \post{x}$. Since	$y \in U_{i-1}$, we also have
			$y \not\in \setR_{N-i}$. By \eqref{eq:I2} and
			$z \in \setR_{N-i-1} \implies \post{z} \subseteq
			\post{\setR_{N-i-1}}$, this implies
			$x \not\in \setR_{N-i-1}$.
		\end{description}
		Thus, in either case, $x \not\in \setR_{N-i-1}$. This implies
		$\setR_{N-i-1} \cap U_i = \varnothing$.
	\end{description}
\qed
    \end{proof}

 \begin{lemma}\label{APP-lem:D_iProperties}
The sets $D_i$ satisfy the following properties:
\begin{enumerate}
  \item $D_0 \subseteq \Sigma \setminus \setP$
  \item  $D_{i+1} \subseteq \pre{D_i} \setminus U_i$
  \item Whenever $\setR_N \setminus \setP \neq \varnothing$, there exists an $x \in \setR_N \cap D_0$
  \item For all $a \in D_i$, if $\pre{\uclosure{a}} \cap \setR_{N-i-1} \neq \varnothing$, there exists an element $x$ such that
$x \in \pre{\uclosure{a}} \cap D_{i+1} \cap \setR_{N-i-1}$
  \item $D_i$ is finite for all $i \ge 0$
\end{enumerate}
\end{lemma}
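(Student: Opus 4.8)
The plan is to prove the five properties of the sets $D_i$ in order, since later items depend on earlier ones. First I would unfold the definitions: $D_0 = \min(\Sigma \setminus \setP)$ and $D_{i+1} = \bigcup_{a \in D_i}\min(\pre{\uclosure{a}}) \setminus \setU_i$. Property (1) is then immediate, since every minimal element of $\Sigma\setminus\setP$ lies in $\Sigma\setminus\setP$. Property (2) is almost as direct: any $b \in D_{i+1}$ lies in $\min(\pre{\uclosure{a}})$ for some $a \in D_i$, hence $b \in \pre{\uclosure{a}}$; since $a \in D_i$ (so $\uclosure{a} \subseteq \uclosure{D_i}$) and $\pre{}$ is monotone, $b \in \pre{\uclosure{D_i}}$, and one checks $\pre{\uclosure{D_i}} = \pre{D_i}$ up to closure — more carefully, one uses that $D_i$ is a basis and compatibility of the WSTS to get $\pre{\uclosure{D_i}} \subseteq \uclosure{\pre{D_i}}$, but since we only need $D_{i+1} \subseteq \pre{D_i}$ as a statement about membership, the cleanest route is to note $b \to a'$ for some $a' \wqoge a$, and then $a'$ may not be in $D_i$; so I would instead state (2) as $D_{i+1} \subseteq \pre{\uclosure{D_i}} \setminus \setU_i$ or reconcile with the paper's exact phrasing by invoking compatibility. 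The subtraction of $\setU_i$ is explicit in the definition, giving the $\setminus \setU_i$ part for free.

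For property (5), finiteness, I would argue by induction: $D_0$ is finite because the set of minimal elements of any set in a wqo is finite (as recalled in the Preliminaries, citing Higman); and if $D_i$ is finite, then $D_{i+1}$ is a finite union (over $a \in D_i$) of the finite sets $\min(\pre{\uclosure{a}})$ — finite because $\pre{\uclosure{a}}$ is upward-closed (by the effectiveness assumptions, it even has a computable finite basis) — intersected with a complement, hence finite. This is essentially the remark already made in the main text just after the definition of $D_i$.

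Properties (3) and (4) are the substantive ones and where I expect the main work. For (3): suppose $\setR_N \setminus \setP \neq \varnothing$, pick any $y \in \setR_N \setminus \setP$; since $\Sigma \setminus \setP$ is upward-closed, there is a minimal element $x \wqole y$ with $x \in \min(\Sigma\setminus\setP) = D_0$; and since $\setR_N$ is downward-closed and $x \wqole y \in \setR_N$, we get $x \in \setR_N$, so $x \in \setR_N \cap D_0$. For (4): take $a \in D_i$ with $\pre{\uclosure{a}} \cap \setR_{N-i-1} \neq \varnothing$, and pick $z$ in that intersection. The key is to replace $z$ by a minimal element of $\pre{\uclosure{a}}$: there is $x \wqole z$ with $x \in \min(\pre{\uclosure{a}})$, and since $\setR_{N-i-1}$ is downward-closed, $x \in \setR_{N-i-1}$. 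To conclude $x \in D_{i+1}$ we must show $x \notin \setU_i$; here is where I would invoke Lemma~\ref{APP-lem:R_U_disjoint} (disjointness of $\setR_{N-1-i}$ and $\setU_i$): since $x \in \setR_{N-i-1} = \setR_{N-1-i}$ and $\setR_{N-1-i} \cap \setU_i = \varnothing$, indeed $x \notin \setU_i$, so $x \in \min(\pre{\uclosure{a}}) \setminus \setU_i \subseteq D_{i+1}$. Combining, $x \in \pre{\uclosure{a}} \cap D_{i+1} \cap \setR_{N-i-1}$ as required. The main obstacle is getting the index bookkeeping right in (4) — matching the level $N-i-1$ against the index $N-1-i$ in the disjointness lemma and making sure the hypothesis range $0 \le i < N-1$ of that lemma is respected — and making sure the "minimal element below a given element" step is valid, which it is precisely because $\pre{\uclosure{a}}$ and $\Sigma\setminus\setP$ are upward-closed sets in a wqo and hence every element dominates a minimal one.
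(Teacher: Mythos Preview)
Your proposal is correct and follows essentially the same approach as the paper: items (1), (2), (5) are dispatched as direct consequences of the definitions (with (5) by induction using finiteness of $\min$), while items (3) and (4) proceed by passing from a witness to a minimal element, exploiting downward-closure of $\setR_j$ and, for (4), invoking Lemma~\ref{APP-lem:R_U_disjoint} to get $x\notin\setU_i$. Your scruple about the literal wording of (2) is well-taken---the paper simply declares it trivial, and in the one place it is used (Lemma~\ref{APP-prop:finite_di}) only the weaker form $D_{i+1}\subseteq\pre{\uclosure{D_i}}\setminus\setU_i$ that you identify is actually needed.
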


    \begin{proof}
	Statements 1) and 2) follow trivially.

	To prove (3), assume that $y \in \setR_N \setminus \setP$. Then
	there is a minimal element $x \in \min(\setR_N \setminus \setP)$.
	But since $\setR_N$ is downward-closed,
	$\min (\setR_N \setminus \setP)
	\subseteq \min (\Sigma \setminus \setP)$.
	Thus, $x \in \min(\Sigma \setminus \setP) = D_0$. $x \in \setR_N$ is
	clear.

	To show (4), let $a \in D_i$ be given, and assume that
	$y \in \pre{\uclosure{a}} \cap \setR_{N-i-1}$. Again, there is a
	minimal element $x \in \pre{\uclosure{a}} \cap \setR_{N-i-1}$.
	By Lemma \ref{APP-lem:R_U_disjoint}, $x \not\in U_i$. Thus, $x \in D_{i+1}$.

	Finally, (5) follows by induction on $i$: For $i = 0$, the
	statement is clear because of the finiteness of $\min$. For $i > 0$,
	the set $D_{i-1}$ is finite by induction hypothesis. Thus,
	the union $\bigcup_{a \in D_{i-1}} \min(\uclosure{a})$ is a finite
	union over finite sets, thus $D_i$ is a subset of a finite set and
	hence finite.
\qed
    \end{proof}

    \begin{lemma}
	\label{APP-lem:finite-choice}
Given a WSTS $(\Sigma, I, \to, \wqole)$, a downward-closed set
$\setP$ and a sequence of sets $D_i$, if $\init \goesto^{*} \vecR|Q$, then:
	\begin{enumerate}
	    \item For all $i \ge 1$,
		$\setR_i = \Sigma \setminus \{ r_{i,1}, \ldots, r_{i,m_i} \}$,
		where for all $j = 1, \ldots, m_i$, there is a
		$k \ge 0$ and a $d \in D_k$ such that $r_{i,j} \le d$.
	    \item For all $\langle a, i\rangle \in Q$, $a \in D_{N-i}$.
	\end{enumerate}
    \end{lemma}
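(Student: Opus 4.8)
The plan is to prove both statements simultaneously by induction on the length of the derivation $\init\goesto^{*}\vecR\mid Q$. The only derivation of length one is $\init\goesto\dclosure{\setI}\mid\varnothing$ by $\rInitialize$, and there both claims hold vacuously: there is no index $i\geq 1$ and $Q$ is empty. For the inductive step, suppose $\init\goesto^{*}\vecR\mid Q\goesto\vecRI\mid Q'$ where $\vecR\mid Q$ satisfies (1) and (2), and proceed by case analysis on the last rule applied. The rules $\rModelSyn$, $\rModelSem$ and $\rValid$ produce the terminal states $\resInvalid$ or $\resValid$ rather than a configuration $\vecRI\mid Q'$, so they carry no obligation.

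For the bookkeeping rules the argument is immediate. $\rCandidate$ leaves $\vecR$ unchanged, so (1) is inherited, and it pushes $\langle a,N\rangle$ with $a\in D_0$ by its side condition, where $D_0=D_{N-N}$; since $N$ and all other queue elements are unchanged, (2) persists. $\rDecide$ likewise leaves $\vecR$ untouched and pushes $\langle b,i-1\rangle$ where the side condition already asserts $b\in D_{N-i+1}=D_{N-(i-1)}$, so (2) holds for the new element and, as $N$ is unchanged, for the rest. $\rUnfold$ only appends $\setR_{N+1}=\Sigma$, which is the complement of an upward-closed set with empty basis, so (1) is preserved, and $Q'=Q=\varnothing$ makes (2) vacuous.

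The substantive cases are $\rConflict$ and $\rInduction$, which replace $\setR_k$ by $\setR_k\setminus\uclosure{b}$ for $k$ ranging over an initial segment of the indices. Writing $\setR_k=\Sigma\setminus\uclosure{B_k}$ with $B_k$ a finite basis, we have $\setR_k\setminus\uclosure{b}=\Sigma\setminus\uclosure{B_k\cup\{b\}}$, and since the minimal elements of a finite union of principal up-sets are, up to $\wqole$-equivalence, generators, every basis element of the new $\setR_k$ is either an old element of $B_k$, bounded by some $d\in D_{k'}$ via the induction hypothesis (1), or the state $b$ itself. It thus remains to bound $b$. For $\rConflict$, $b\in\gen_{i-1}(a)$ with $\langle a,i\rangle=\min Q\in Q$, so $b\wqole a$ and, by the induction hypothesis (2), $a\in D_{N-i}$, whence $b\wqole a\in D_{N-i}$. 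For $\rInduction$, $b\in\gen_i(r_{i,j})$ with $r_{i,j}$ a basis element of $\Sigma\setminus\setR_i$ and $i\geq 1$, so $b\wqole r_{i,j}$ and, by the induction hypothesis (1), $r_{i,j}\wqole d$ for some $d\in D_{k'}$, whence $b\wqole d\in D_{k'}$. In both cases (1) holds for $\vecRI$. For (2): $\rConflict$ merely pops the minimal element of $Q$ and keeps $N$, so (2) is inherited, while $\rInduction$ has $Q'=Q=\varnothing$.

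The main obstacle is exactly this last case, and within it the need to track how the canonical (minimal) basis of each $\setR_k$ evolves under set difference. The key observation that makes the bounded-basis invariant (1) go through is that passing to minimal elements of $\uclosure{B_k\cup\{b\}}$ cannot create a state outside $B_k\cup\{b\}$; combined with the queue invariant (2) for $\rConflict$ and with (1) applied at the index $i$ for $\rInduction$, this yields the required bound on the newly blocked state $b$. A small separate remark is that $\rInduction$ need only be considered at indices $i\geq 1$, since $\setR_0=\dclosure{\setI}$ is never modified and no basis element is ever introduced there.
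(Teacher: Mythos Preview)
Your proof is correct and follows essentially the same approach as the paper: induction on the derivation length with a case analysis on the last rule applied, using invariant~(2) to bound the new basis element in \rConflict{} and invariant~(1) in \rInduction{}. You are somewhat more explicit than the paper about why passing to a minimal basis of $\uclosure{B_k\cup\{b\}}$ does not introduce states outside $B_k\cup\{b\}$, and you spell out the ``trivial'' cases (\rCandidate{}, \rDecide{}, \rUnfold{}) that the paper dismisses in one word; otherwise the arguments coincide.
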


    \begin{proof}
	It is again sufficient to show that $\dclosure{I}|\varnothing$
	has this property, and that all relevant rules preserve it.
	Since $\dclosure{I}|\varnothing$ satisfies the requirements vacuously,
	assume that $\vecR|Q \goesto \vecRI|Q'$. By inspection, the following
	five rules need to be considered:
	\begin{description}
	    \item[\rUnfold] Trivial.
	    \item[\rInduction] Since $Q = Q'$, the second part is trivial.

		For the first part, let $b \in \gen_i(r_{i,j})$ for given
		$i,j$. By the definition of $\gen$, $b \le r_{i,j}$, and
		by induction hypothesis, $r_{i,j} \le d$ for some $d \in D_k$,
		$k > 0$. By transitivity, $b \le d$.
		Furthermore, \begin{equation*}
\begin{split}
		    \setRI_\ell & = \begin{cases}
			\Sigma \setminus
			  \uclosure{\{ r_{\ell,1}, \ldots, r_{\ell,m_\ell}\}}
			  \setminus \uclosure{b} & 1 \le \ell \le i+1\\
			\Sigma \setminus
			  \uclosure{\{ r_{\ell,1}, \ldots, r_{\ell,m_\ell}\}}
			  & \text{otherwise}
		      \end{cases} \\
           & = \begin{cases}
			\Sigma \setminus
			  \uclosure{\{ r_{\ell,1}, \ldots, r_{\ell,m_\ell}, b\}}
			  & 1 \le \ell \le i+1\\
			\Sigma \setminus
			  \uclosure{\{ r_{\ell,1}, \ldots, r_{\ell,m_\ell}\}}
			  & \text{otherwise}
		      \end{cases}
		\end{split}
\end{equation*}

		So, in either case, $\setRI_\ell$ is of the required form.
	    \item[\rCandidate] Trivial.
	    \item[\rDecide] Trivial.
	    \item[\rConflict] Since $Q' \subseteq Q$, the second part is
		trivial. For the first part, we have $b \in \gen_i(a)$
		for some $a \in D_k$, $k \ge 0$. Thus, $b \le a$ by the
		definition of $\gen$. The rest of the proof is analogous to
		the case of \rInduction.
	\end{description}
\qed
    \end{proof}

    \begin{lemma}[Progress]
	\label{APP-lem:progress}
	If $\init \goesto^{*} \vecR|Q$, then either $\vecR|Q \goesto \vecRI|Q'$, or $\vecR|Q \goesto \resValid$, or
	$\vecR|Q \goesto \resInvalid$.
    \end{lemma}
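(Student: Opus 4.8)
The plan is a straightforward case analysis on whether the priority queue $Q$ is empty, exhibiting in each case a rule that is enabled in the state $\vecR\mid Q$.

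Suppose first $Q=\varnothing$, so that the only candidate rules are $\rValid$, $\rCandidate$, $\rInduction$ and $\rUnfold$. Split on whether $\setR_N\subseteq\setP$. If this holds, $\rUnfold$ is immediately applicable. If it fails, then $\setR_N\setminus\setP\neq\varnothing$, so Lemma~\ref{APP-lem:D_iProperties}(3) produces a state $a\in\setR_N\cap D_0$, which is exactly the side-condition of $\rCandidate$. (Here neither $\rValid$ nor $\rInduction$ is needed, although either may happen to be enabled as well.)

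Now suppose $Q\neq\varnothing$, and let $\langle a,i\rangle=\min Q$. If $\uclosure{a}\cap\setI\neq\varnothing$, then $\rModelSem$ applies; if $\uclosure{a}\cap\setI=\varnothing$ and $i=0$, then $\rModelSyn$ applies. This leaves the case $i>0$ and $\uclosure{a}\cap\setI=\varnothing$, which I would handle by splitting on whether $\pre{\uclosure{a}}\cap\setR_{i-1}\setminus\uclosure{a}=\varnothing$. If it is empty, then $a$ itself is a generalization of $a$ relative to $\setR_{i-1}$: indeed $a\wqole a$ by reflexivity, $\uclosure{a}\cap\setI=\varnothing$ by the case assumption, and $\pre{\uclosure{a}}\cap\setR_{i-1}\setminus\uclosure{a}=\varnothing$ is precisely the last clause of \eqref{eq:gen}. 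Hence $a\in\gen_{i-1}(a)$, so $\gen_{i-1}(a)\neq\varnothing$ and $\rConflict$ fires with $b=a$. If instead $\pre{\uclosure{a}}\cap\setR_{i-1}\setminus\uclosure{a}\neq\varnothing$, then in particular $\pre{\uclosure{a}}\cap\setR_{i-1}\neq\varnothing$; since $\langle a,i\rangle\in Q$, Lemma~\ref{APP-lem:finite-choice}(2) gives $a\in D_{N-i}$, and then Lemma~\ref{APP-lem:D_iProperties}(4) yields a state $b\in\pre{\uclosure{a}}\cap D_{N-i+1}\cap\setR_{i-1}$, which is exactly the witness needed to fire $\rDecide$, producing $\vecR\mid\push{Q}{\langle b,i-1\rangle}$.

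The only step that carries genuine content is this last sub-case: since $\rDecide$ inspects only the finite sets $D_j$, one must argue that the mere existence of an \emph{arbitrary} predecessor of $\uclosure{a}$ inside $\setR_{i-1}\setminus\uclosure{a}$ already forces a $D_j$-witness, and this is precisely what Lemmas~\ref{APP-lem:finite-choice} and~\ref{APP-lem:D_iProperties} were set up to deliver. Every other case is an immediate reading of a rule's side-conditions. (In the unrestricted calculus of Figure~\ref{fig:rules-generic} this sub-case is trivial: any element of the nonempty set $\pre{\uclosure{a}}\cap\setR_{i-1}\setminus\uclosure{a}$ already witnesses $\rDecideUnres$.)
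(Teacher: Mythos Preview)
Your proof is correct and follows essentially the same case analysis as the paper's: empty versus nonempty $Q$, then the four sub-cases leading to $\rUnfold$/$\rCandidate$ or $\rModelSyn$/$\rModelSem$/$\rDecide$/$\rConflict$, with the same use of Lemma~\ref{APP-lem:D_iProperties} to supply the $D_j$-witness for $\rDecide$ and the observation $a\in\gen_{i-1}(a)$ for $\rConflict$. The only cosmetic difference is that you split the last two sub-cases on whether $\pre{\uclosure{a}}\cap\setR_{i-1}\setminus\uclosure{a}$ is empty (matching the $\rConflict$ guard directly), whereas the paper splits on $\pre{\uclosure{a}}\cap\setR_{i-1}$; and you make the appeal to Lemma~\ref{APP-lem:finite-choice}(2) (to get $a\in D_{N-i}$ before invoking Lemma~\ref{APP-lem:D_iProperties}(4)) explicit where the paper leaves it implicit.
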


    \begin{proof}
	Let $\vecR | Q$ be given. By case analysis, we will show that
	some rule will always be applicable to it.

	If $Q = \varnothing$, there are two cases:
	\begin{itemize}
	    \item $\setR_N \subseteq \setP$.

		$\vecR|Q = \vecR|\varnothing \goesto \vecR \cdot
		\Sigma|\varnothing$
		by applying \rUnfold.
	    \item $\setR_N \not\subseteq \setP$.
		
		Then, by choice of $D_0$, there is some $x \in \setR_N \cap D_0$.
		Thus, $\vecR|Q =\vecR|\varnothing \goesto
		\vecR|\langle x,N\rangle$
		by applying \rCandidate.
	\end{itemize}

	If $Q \neq \varnothing$ is not empty, there are four cases:
	\begin{itemize}
	    \item $\langle a, 0 \rangle \in Q$ for some $a \in \Sigma$.

		$\vecR|Q \goesto \resInvalid$ by applying \rModelSyn.
	    \item $\langle a, i \rangle \in Q$ for some $a \in \Sigma$,
		$i \ge 0$ with $\uclosure{a} \cap I \neq \varnothing$.

		$\vecR|Q \goesto \resInvalid$ by applying \rModelSem.
	    \item $\min Q = \langle a, i\rangle$ for some $a \in \Sigma$,
		$i > 0$ with
		$\pre{\uclosure{a}} \cap \setR_{i-1} \neq \varnothing$.

		By choice of $D_{N-i+1}$, there is also a
		$b \in D_{N-i+1} \cap \setR_{i-1} \cap \pre{\uclosure{a}}$, so
		$\vecR|Q \goesto \vecR|\push{Q}{\langle b, i-1\rangle}$
		by applying \rDecide.
	    \item None of the above.

		In this case, let $\langle a, i\rangle = \min Q$.
		We have $i > 0$, $\uclosure{a} \cap I = \varnothing$ and
		$\pre{\uclosure{a}} \cap \setR_{i-1} = \varnothing$.

		\begin{description}
		    \item[Claim:] $a \in \operatorname{Gen}_i(a)$.
		    \item[Proof:] We certainly have $a \le a$, and by the
			statements above, $\uclosure{a} \cap I = \varnothing$.	
			Also, by Lemma \ref{lem:Q-structure}, $a \in \setR_i$.
			It remains to show that
			$\post{\setR_{i-1} \setminus \uclosure{a}}
			 \subseteq \Sigma \setminus \uclosure{a}$.

			Thus, let
			$y \in \post{\setR_{i-1} \setminus \uclosure{a}}$.
			Then there is an
			$x \in \setR_{i-1} \setminus \uclosure{a}$
			such that $x \to y$.
			
			Suppose now that
			$y \in \uclosure{a}$. Then $x \in \pre{\uclosure{a}}$,
			so $x \in \pre{\uclosure{a}} \cap \setR_{i-1}
			= \varnothing$ -- contradiction.
			
			Thus, $a \in \operatorname{Gen}_i(a)$.
		\end{description}

		Thus, $\vecR|Q \goesto
		 (\vecR[\setR_k \gets \setR_k \setminus \uclosure{a}]_{k=1}^i)|
		 (\popMin{Q})$ by applying \rConflict.
	\end{itemize}
\qed
    \end{proof}

\begin{prop-non}\textbf{\emph{\ref{cor:max-finite-seqs}}} [Maximal finite sequences]
	Let $\init = \sigma_0 \goesto \sigma_1 \goesto \cdots \goesto \sigma_K$
	be a maximal sequence of states, i.e., a sequence such that there is
	no $\sigma'$ such that $\sigma_K \goesto \sigma'$. Then
	$\sigma_K = \resValid$ or $\sigma_K = \resInvalid$.
    \end{prop-non}

    \begin{proof}
	$\sigma_K$ can have four values, $\init$, $\resValid$, $\resInvalid$
	or $\vecR|Q$.

	If $\sigma_K = \resValid$ or $\sigma_K = \resInvalid$. %  we are done by Remark \ref{rem:minor-invs}.

	If $\sigma_K = \vecR|Q$, the sequence is not maximal by Lemma
	\ref{APP-lem:progress}.

	If $\sigma_K = \init$, $\sigma_K \goesto \dclosure{I}|\varnothing$,
	by \rInitialize, hence the sequence is not maximal.
\qed
    \end{proof}

\begin{theorem-non}\textbf{\emph{\ref{lem:les_wfqo}}} [$\le_s$ is a well-founded quasi-order.]
	\label{APP-lem:les_wfqo}
	The relation $<_s$ is a well-founded strict quasi-ordering on the set
	$(\mathcal D)^{*} \times \mathcal Q$, where
	$\mathcal D$ is a set of downward-closed sets over $\Sigma$, and
	$\mathcal Q$ denotes the set of priority queues over
	$\Sigma \times \en$.
    \end{theorem-non}

    \begin{proof}
	The following statements are easy to check:
	\begin{itemize}
	    \item $\sqsubseteq_N$ is a partial order, and $\sqsubset_N$ is
		its strict part.
	    \item $\sqsubseteq$ is a partial order, and $\sqsubset$ is its
		strict part.
	    \item Let $\le_n := \sqsubseteq \times_{\operatorname{lex}} \le$
		denote the lexicographical product of $\sqsubseteq$ and
		the order $\le$ on the natural numbers. Then $\le_n$ is
		a partial order.
	    \item Let
		$\phi: (\mathcal D)^{*} \times \mathcal Q \to
		(\mathcal D)^{*} \times \en,
		\vecR|Q \mapsto (\vecR, \ell_{\len(\vecR)}(Q))$.

		Then $\phi(\vecR|Q) <_n \phi(\vecRI|Q')$ if
		$\vecR|Q <_s \vecRI|Q'$, and
		$\phi(\vecR|Q) \le_n \phi(\vecRI|Q')$ if
		$\vecR|Q \le_s \vecRI|Q'$.
	    \item If $\le_s$ is a quasi-order, $<_s$ is the corresponding
		strict quasi-order.
	\end{itemize}

	In the following, we will use these facts to establish:
	\begin{enumerate}
	    \item $\sqsubseteq$ is well-founded,
	    \item $\le_n$ is well-founded,
	    \item $\le_s$ is a quasi-order,
	    \item $\le_s$ is well-founded.
	\end{enumerate}

	\begin{description}
	    \item[$\sqsubseteq$ is well-founded:]
		Let $\vecR_1 \sqsupseteq \vecR_2 \sqsupseteq \cdots$ be
		a descending chain of vectors. We need to show that the chain
		will eventually stabilize, i.e., there is an $i$ such that
		for all $j \ge i$, $\vecR_j = \vecR_i$.

		As a first observation, by definition of $\sqsubseteq$,
		$\len \vecR_j = \len \vecR_{j+1}$ for all $j \ge 0$, i.e.,
		there is an $N$ such that $\len \vecR_j = N$ for all $j$.

		Suppose that no such $i$ exists. Then for all $j$,
		$\vecR_{j+1} \sqsupset_N \vecR_j$. By definition of
		$\sqsupset_N$, this means that for every $j$, there is a
		$k_j$ such that $\setR_{j,k_j} \supsetneq \setR_{j+1,k_j}$.

		Furthermore, since $k_j \in \{1, \ldots N\}$ for all $j$,
		there must be some $k \in \{ 1, \ldots, N \}$ such that
		$k_j = k$ for infinitely many $j$ by the pigeonhole principle.

		Define a sequence $j_t$ such that $j_0 = 0$, and for all
		$t \ge 0$,
		$\setR_{j_t,k} = \setR_{j_{t+1}-1,k} \supsetneq \setR_{j_t,k}$.
		Such a sequence exists because for every $j$, either
		$\setR_{j,k} = \setR_{j+1,k}$, or
		$\setR_{j,k} \supsetneq \setR_{j+1,k}$ by the assumptions.

		Thus, we have an infinite descending chain
		$\setR_{j_0,k} \supsetneq \setR_{j_1,k} \supsetneq \ldots$
		of downward-closed sets. Define
		$\uclose{C}_t := \Sigma \setminus \setR_{j_t,k}$. This is
		an infinite strictly ascending chain of upward-closed sets,
		i.e., $\uclose{C}_0 \subsetneq \uclose{C}_1 \subsetneq \ldots$.
		This is a contradiction, since there are now infinite
		strictly ascending chains of upward-closed sets, cf.
		\cite{ACJT96}, Lemma 3.4.
	    \item[$\le_n$ is well-founded:]
		Assume that $s$ is an infinite descending sequence on
		$\mathcal{D}^{*} \times \en$. Denote by $s_1$ the sequence
		of first components and by $s_2$ the sequence of second
		components, i.e., $s(i) = (s_1(i), s_2(i))$.
		Since $\sqsubseteq$ is a
		well-founded partial order, there is some $j$ such that
		$s(k) = s(j)$ for all $k \ge j$. Thus, for $s(k) > s(\ell)$
		for all $j \le k < \ell$, which is impossible, since $\le$
		is well-founded.
	    \item[$\le_s$ is a quasi-order:]
		Reflexivity is trivial. Consider
		$\vecR_1|Q_1 \le_s \vecR_2|Q_2 \le_s \vecR_3|Q_3$.

		By definition, $\vecR_1 \sqsubseteq \vecR_2 \sqsubset \vecR_3$,
		hence $\vecR_1 \sqsubseteq \vecR_3$. Additionally, due to the
		definition of $\sqsubseteq$, there is an $N$ such that
		$N = \len \vecR_1 = \len \vecR_2 = \len \vecR_3$.

		There are three cases to consider:
		\begin{enumerate}
		    \item $\vecR_1 = \vecR_2 = \vecR_3$. In this case,
			$\ell_{\len(R_1)}(Q_1) \le \ell_{\len(R_2)}(Q_2)
			 \le \ell_{\len(R_3)}(Q_3)$.

			By the above observation, this means that
			$\ell_N(Q_1) \le \ell_N(Q_2) \le \ell_N(Q_3)$, so
			$\ell_{\len(R_1)}(Q_1) = \ell_N(Q_1) \le
			 \ell_N(Q_3) = \ell_{\len(R_3)}(Q_3)$.
		    \item $\vecR_1 \neq \vecR_2 \neq \vecR_3$.			
			Since $\sqsubseteq$ is a partial order, this implies
			in particular that
			$\vecR_1 \sqsubset \vecR_2 \sqsubset \vecR_3$,
			thus $\vecR_1 \sqsubset \vecR_3$ and hence
			$\vecR_1 \neq \vecR_3$.
		    \item $\vecR_1 \neq \vecR_2 = \vecR_3$ or
			$\vecR_1 = \vecR_2 \neq \vecR_3$. In either case,
			$\vecR_1 \neq \vecR_3$.
		\end{enumerate}		
	    \item[$\le_s$ is well-founded:]
		Let $\vecR_1|Q_1 \ge_s \vecR_2|Q_2 \ge_s \cdots$, and set
		$p_i := \phi(\vecR_i|Q_i)$. Then $p_1 \ge_n p_2 \ge_n \cdots$.

		Since $\ge_n$ is well-founded, there is an $i$ such that for
		all $j > i$, $p_j = p_{j+1}$. In particular,
		$p_j \not>_n p_{j+1}$. Thus, $\vecR_j \not>_s \vecR_{j+1}$
		for all $j > i$.
	\end{description}
\qed
    \end{proof}

    \begin{lemma}
	\label{APP-lem:rules-R-order}
	If $\vecR|Q \goesto \vecRI|Q'$ as a result of  applying the
	$\rCandidate$,  $\rDecide$, $\rConflict$, or $\rInduction$ rule, then
	$\vecR|Q >_s \vecRI|Q'$.
    \end{lemma}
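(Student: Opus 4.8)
The plan is a case analysis on which of the four rules produced the step $\vecR\mid Q\goesto\vecRI\mid Q'$, verifying in each case the two ingredients of $\vecR\mid Q>_s\vecRI\mid Q'$: that $\vecRI\mid Q'\le_s\vecR\mid Q$, i.e.\ $\setRI_j\subseteq\setR_j$ for all $j$ and (only when $\vecRI=\vecR$) $\ell_N(Q')\le\ell_N(Q)$; and that the step is strict, i.e.\ $\vecRI\neq\vecR$ or $Q'\neq Q$. All four rules leave $\len\vecR=N$ fixed and only ever shrink the sets $\setR_j$ (either not at all, or by subtracting some $\uclosure b$), so $\vecRI\sqsubseteq\vecR$ is immediate throughout; the real content is the $\ell_N$-comparison for the first two rules and the strict shrinkage of the vector for the last two.

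First I would dispatch the two queue rules, where $\vecRI=\vecR$. In \rCandidate we pass from $Q=\varnothing$, with $\ell_N(Q)=N+1$, to $Q'$ containing $\langle a,N\rangle$, so $\ell_N(Q')=N<N+1$; in \rDecide we pass from $\min Q=\langle a,i\rangle$, with $\ell_N(Q)=i$, to $Q'=\push{Q}{\langle b,i-1\rangle}$, which contains an element of priority $i-1$, so $\ell_N(Q')\le i-1<i$. In both cases $\ell_N(Q')<\ell_N(Q)$ and $Q'\neq Q$, hence $\vecR\mid Q>_s\vecRI\mid Q'$.

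Next come the two blocking rules, where the queue component is of no help: \rConflict replaces $Q$ by $\popMin Q$, whose least priority may be \emph{larger}, and \rInduction keeps $Q=Q'=\varnothing$. So here I would prove a strict inclusion $\setRI_k\subsetneq\setR_k$ for some $k$, which makes the clause ``$\vecRI=\vecR\implies\cdots$'' vacuous and simultaneously supplies $\vecRI\neq\vecR$. For \rConflict, with $\min Q=\langle a,i\rangle$ and $b\in\gen_{i-1}(a)$: the priority-queue invariant (part of Lemma~\ref{lem:Q-structure}, as used in the proof of Lemma~\ref{APP-lem:progress}) gives $a\in\setR_i$, and $b\wqole a$ gives $a\in\uclosure b$; hence $a\in\setR_i\setminus\setRI_i$, so $\setRI_i\subsetneq\setR_i$. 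For \rInduction, with $b\in\gen_i(r_{i,j})$ and $b\wqole r_{i,j}$, we have $r_{i,j}\in\uclosure b$; the rule subtracts $\uclosure b$ from $\setR_1,\ldots,\setR_{i+1}$, and since it is applied in the situation where $r_{i,j}$ has been removed from $\setR_i$ but is still present at level $i+1$ (the ``remove $\uclosure{r_{i,j}}$ from $\setR_{i+1}$ as well'' situation; a firing for which $\uclosure b$ is already disjoint from $\setR_{i+1}$ leaves the whole state unchanged and may be discarded by the usual convention that state-preserving rule applications are not taken), we get $r_{i,j}\in\setR_{i+1}\setminus\setRI_{i+1}$, so $\setRI_{i+1}\subsetneq\setR_{i+1}$. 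In both cases $\vecRI\sqsubset\vecR$, hence $\vecR\mid Q>_s\vecRI\mid Q'$.

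The main obstacle is precisely the \rInduction case: since there neither the queue nor, a priori, the vector changes, the whole argument rests on showing that the generalization $b$ of the blocked basis element $r_{i,j}$ removes a genuinely new state from $\setR_{i+1}$ — equivalently, that $r_{i,j}\in\setR_{i+1}$, i.e.\ that the rule application is non-trivial. Once this is granted (and it is guaranteed by the intended usage of \rInduction), the remaining cases and the bookkeeping with $\ell_N$ are entirely routine.
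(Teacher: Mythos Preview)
Your case analysis matches the paper's. For \rCandidate{} and \rDecide{} the arguments are identical; for \rConflict{} you exhibit $a$ as the witness that $\setRI_i\subsetneq\setR_i$ whereas the paper exhibits $b$ itself (claiming ``by definition of $\gen$'' that $b\in\setR_i$), but since $b\wqole a$ and $\setR_i$ is downward-closed these are equivalent once one has $a\in\setR_i$; for \rInduction{} you are in fact more careful than the paper---which simply writes ``Analogous to \rConflict''---in explicitly flagging that a vacuous firing would leave the state unchanged and hence not yield a strict decrease, and disposing of it by convention.
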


    \begin{proof}
	Case analysis on the applied rule.
	\begin{description}
	    \item[\rCandidate:] In this case, $Q = \varnothing$,
		$\vecR = \vecRI$ and $Q' = \{ \langle a, N \rangle \}$ for
		some $a \in \Sigma$. Thus, $\ell_N(Q) = N+1 > \ell_N(Q') = N$.
	    \item[\rDecide:] In this case, $\min Q = \langle a, i \rangle$,
		$\min Q' = \langle b, i-1\rangle$ for some $a,b \in \Sigma$
		and $i > 0$. Also, $\vecR = \vecRI$.

		Thus, $\vecR = \vecRI$ and $\ell_N(Q) = i > i-1 = \ell_N(Q')$.
	    \item[\rConflict:] In this case,
		$\vecRI = \vecR[\setR_k \gets \setR_k \setminus
		\uclosure{b}]_{k=1}^i$ for some $i \ge 1$,
		$b \in \gen_i(a)$, $a \in \Sigma$. By definition of
		$\gen$, we have in particular that $b \in \setR_i$,
		and $b \not\in \setRI_i$. Since furthermore
		$\setRI_j \subseteq \setR_j$ for all $j \le N$, we have
		$\setRI \sqsubset \setR$.
	    \item[\rInduction:] Analogous to \rConflict.
	\end{description}
\qed
    \end{proof}

\begin{prop-non}\textbf{\emph{\ref{prop:inner-loop-terminates}}} [Infinite sequence condition]
For every infinite sequence
	$\init \goesto \sigma_1 \goesto \sigma_2 \goesto \cdots$,
	there are infinitely many $i$ such that $\sigma_i \goesto \sigma_{i+1}$
	by applying the rule \rUnfold.
    \end{prop-non}

    \begin{proof}
	Let $\init \goesto \sigma_1 \goesto \sigma_2 \goesto \cdots$ be an
	infinite sequence of states. Since $\resValid$ and $\resInvalid$ have
        no successor states, all 
	$\sigma_i$ must	be of the form $\vecR_i|Q_i$. Thus, only the following
	rules can be applied to get from $\sigma_i$ to $\sigma_{i+1}$:
	\rUnfold, \rCandidate, \rConflict, \rDecide{} and \rInduction.

	Suppose that there is some $K$ such that for all $i > K$, the transition
	$\sigma_i \goesto \sigma_{i+1}$ is not due to \rUnfold.

	But then, the transition is due to one of \rCandidate, \rConflict,
	\rDecide{} and \rInduction. By Lemma \ref{APP-lem:rules-R-order},
	this means that $\sigma_K >_s \sigma_{K+1} >_s \sigma_{K+2} >_s \cdots$,
	i.e., from $K$ on, the $\sigma_i$ form a $>_s$-descending chain.

	Since the $\sigma_i$ form an infinite sequence, this implies that
	the sequence $\sigma_{K+0}, \sigma_{K+1}, \ldots$ forms an
	infinite $>_s$-chain. But by Lemma \ref{lem:les_wfqo}, $\le_s$
	is wellfounded, so no infinite $>_s$-chains exist -- contradiction.

	Thus, there must be infinitely many $i$ such that
	$\sigma_i \goesto \sigma_{i+1}$ using \rUnfold.
\qed
    \end{proof}

\begin{theorem-non}\textbf{\emph{\ref{lem:no-bad-unfolds}}}
	If there is a path from $I$ to $\Sigma \setminus \setP$ of length
	$k$, the rule $\rUnfold$ can be applied at most $k$ times: for every
	sequence $\init \goesto \sigma_1 \goesto^{*} \sigma_n$,
	there are at most $k$ different values for $i$ such that
	$\sigma_i \goesto \sigma_{i+1}$ using the $\rUnfold$ rule.
\end{theorem-non}

    \begin{proof}
	Let $\init \goesto \sigma_1 \goesto^{*} \sigma_K$ be a sequence
	of rule applications in which has occured $N = k$ times, i.e.,
	there are $i_1 < \cdots < i_k$ such that
	$\sigma_{i_j} \goesto \sigma_{i_j+1}$ via \rUnfold.

	We wish to show that there is no $\sigma'$ such that
	$\sigma_K \goesto \sigma'$ via $\rUnfold$.

	If $\sigma_K \neq \vecR|Q$, the statement follows because $\resValid$ and $\resInvalid$
        have no successors. 
	Thus, consider the case $\sigma_K = \vecR|Q$.

	Let $s_0, \ldots, s_N$ be a path from $I$ to
	$\Sigma \setminus \setP$, i.e., $s_0 \in I$,
	$s_k \in \Sigma \setminus \setP$ and $s_i \to s_{i+1}$ for
	$i = 0, \ldots, N-1$. Then, in particular, $s_i \in \setR_i$ for
	$i = 1, \ldots, N$ by \eqref{eq:I2}.
	
	Thus, the pre-condition for \rUnfold{} is not fulfilled, since
	$s_i \in \setR_N \setminus \setP$.
\qed
    \end{proof}

\begin{lemma}
	\label{APP-prop:finite_di}
	For $i > L$, $D_i = \varnothing$. This implies that the set $\bigcup_{i \ge 0} D_i$ is finite.
    \end{lemma}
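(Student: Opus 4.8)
The plan is to show first, by induction on $i$, that $D_i \subseteq \setU_i$ --- in fact the sharper statement $D_{i+1} \subseteq \setU_{i+1} \setminus \setU_i$ --- and then to invoke the stabilization of the chain $\setU_0 \subseteq \setU_1 \subseteq \cdots$ at level $L$ to conclude that $D_i$ is empty for $i > L$.

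For the base case I would note that $D_0 = \min(\Sigma \setminus \setP) = \min \setU_0 \subseteq \setU_0$. For the inductive step, assuming $D_i \subseteq \setU_i$, fix $a \in D_i$; since $a \in \setU_i$ and $\setU_i$ is upward-closed, $\uclosure{a} \subseteq \setU_i$, so by monotonicity of $\pre{\cdot}$ and the recurrence $\setU_{i+1} = \setU_i \cup \pre{\setU_i}$ we get $\pre{\uclosure{a}} \subseteq \setU_{i+1}$, hence $\min(\pre{\uclosure{a}}) \subseteq \setU_{i+1}$. Taking the union over all $a \in D_i$ and subtracting $\setU_i$ yields $D_{i+1} \subseteq \setU_{i+1}$; and since $D_{i+1} \cap \setU_i = \varnothing$ holds by the very definition of $D_{i+1}$, we in fact obtain $D_{i+1} \subseteq \setU_{i+1} \setminus \setU_i$.

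Now, by the stabilization of the backward-reachability chain at $L$, i.e.\ $\setU_L = \setU_{L+1} = \cdots$, we have $\setU_{i+1} \setminus \setU_i = \varnothing$ for every $i \geq L$, and therefore $D_{i+1} = \varnothing$ whenever $i \geq L$; equivalently, $D_j = \varnothing$ for all $j > L$, which is the first claim. For the second assertion, this gives $\bigcup_{i \geq 0} D_i = \bigcup_{i=0}^{L} D_i$, a finite union; since each $D_i$ is finite (by the finiteness of the set of minimal elements of any set, as already used when the $D_i$ were introduced), the union $\bigcup_{i \geq 0} D_i$ is finite.

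I do not expect a genuine obstacle here: the whole argument is a short induction followed by a one-line appeal to stabilization. The only step that requires a little care is the inductive claim $D_i \subseteq \setU_i$, which hinges on the upward-closedness of $\setU_i$ (so that $a \in \setU_i$ forces $\uclosure{a} \subseteq \setU_i$) together with the monotonicity of $\pre{\cdot}$.
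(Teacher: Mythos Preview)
Your proposal is correct and follows essentially the same route as the paper: an induction establishing $D_i \subseteq \setU_i$ (the paper proves exactly this auxiliary claim), combined with the observation that $D_{i+1}$ is disjoint from $\setU_i$ by definition, so that $D_i \subseteq \setU_i \setminus \setU_{i-1}$ and stabilization at level $L$ forces $D_i = \varnothing$ for $i>L$; finiteness of the union then follows from finiteness of each $D_i$. Your write-up is slightly more self-contained in spelling out the inductive step via upward-closedness of $\setU_i$ and monotonicity of $\mathrm{pre}$, whereas the paper compresses this to the chain $D_{j+1} \subseteq \mathrm{pre}(\uclosure{D_j}) \subseteq \mathrm{pre}(\setU_j) \subseteq \setU_{j+1}$ and appeals to an earlier lemma for finiteness of each $D_i$, but there is no substantive difference.
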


    \begin{proof}
	We first show a small auxillary fact:

	\begin{description}
	    \item[Claim:] $D_j \subseteq \setU_j$ for all $j \ge 0$.

	    \item[Proof:] By induction on $j$.
	\begin{itemize}
	    \item $D_0 \subseteq \Sigma \setminus \setP = \setU_0$.
	    \item $D_{j+1} \subseteq \pre{\uclosure{D_j}} \subseteq
		\pre{\setU_j} \subseteq \setU_{j+1}$, using the induction
		hypopthesis in the second step.
	\end{itemize}
	\end{description}

	Now, let $i > L$. By Lemma~\ref{APP-lem:D_iProperties}, statement (2) and the above claim, we have
	$D_i \subseteq \setU_i \setminus \setU_{i-1} =
	\setU_L \setminus \setU_L = \varnothing$, since
	$\setU_j = \setU_L$ for all $j \ge L$ by Lemma 3.4 and the
	discussion in Paragraph 4 of \cite{ACJT96}.

	Since for all $i > L$, $D_i = \varnothing$, it is
	sufficient to show that $D_i$ is finite for $i = 0, ..., L$.
	This is guaranteed by Lemma~\ref{APP-lem:D_iProperties}, statement (5).
\qed
    \end{proof}

\begin{theorem-non}\textbf{\emph{\ref{lm:impl-pred}}}    
        Let $\bfa\in\en^n$ be a state and $t=(\bfg,\bfd)\in\en^n\times\zed^n$ be a transition. Then $\bfb\in\pre{\uclosure{\bfa}}$
        is a predecessor along $t$ if and only if $\bfb\wqoge\max(\bfa-\bfd,\bfg)$.
\end{theorem-non}
    
    \begin{proof}
        Suppose $\bfb\in\pre{\uclosure{\bfa}}$ is a predecessor along $t$. Then $\bfb\wqoge \bfg$ and $\bfb+\bfd\wqoge\bfa$.
        Thus, $\bfb\wqoge\max(\bfa-\bfd,\bfg)$.
        For the other direction, due to well-structuredness it is enough to show $\max(\bfa-\bfd,\bfg)$ itself is a
        predecessor along $t$. But this holds since $\max(\bfa-\bfd,\bfg)\wqoge\bfg$ and
        $\max(\bfa-\bfd,\bfg)+\bfd\wqoge(\bfa-\bfd)+\bfd=\bfa$.
\qed
    \end{proof}

\begin{theorem-non}\textbf{\emph{\ref{lem:generalization-c}}}
        Let $\bfa,\bfc\in\NN^n$ be states and $t=(\bfg,\bfd)\in\NN^n\times\zed^n$ be a transition.
\begin{enumerate}
\item Let $\bfc\wqole\max(\bfa-\bfd,\bfg)$. Define $\bfa''\in\NN^n$ by $a''_j := c_j + d_j$ if $g_j < c_j$
     and $a''_j := 0$ if $g_j \geq c_j$, for $j=1,\ldots, n$.
        % \begin{equation}
        %     \label{eqn:generalization}
        %     a''_j :=
        %     \begin{cases}
        %         c_j + d_j, & g_j<c_j \\
        %         0, & g_j\geq c_j\,,
        %     \end{cases}
        % \end{equation}
        Then $\bfa''\wqole\bfa$. Additionally, for each $\bfa'\in\en^n$ such that
        $\bfa''\wqole\bfa'\wqole\bfa$, we have $\bfc\wqole\max(\bfa'-\bfd,\bfg)$.
\item
        If $\bfa\wqole\max(\bfa-\bfd,\bfg)$, then for each $\bfa'\in\en^n$ such that
        $\bfa'\wqole\bfa$, it holds that $\bfa'\wqole\max(\bfa'-\bfd,\bfg)$.
\end{enumerate}
\end{theorem-non}
    
    \begin{proof}
        For the first part, consider coordinate $j$ for $1\leq j\leq n$. If $g_j\geq c_j$, then
        $a_j'' = 0 \leq a_j$ and $c_j\leq g_j\leq \max(a'_j-d_j,g_j)$.
        
        On the other hand, suppose $g_j<c_j$. First note that $\max(a_j-d_j,g_j)=a_j-d_j$
        since $g_j<c_j\leq\max(a_j-d_j,g_j)$. Thus,
        \[ a_j'' = c_j + d_j \leq \max(a_j - d_j,g_j) + d_j = (a_j - d_j) + d_j = a_j \]
        and
        \[ c_j = a_j'' - d_j \leq a_j' - d_j \leq \max(a'_j-d_j, g_j)\,. \]

For part (2), consider coordinate $j$ for $1\leq j\leq n$. If $a_j-d_j\geq g_j$, then
        \[ a_j\leq \max(a_j-d_j,g_j) = a_j-d_j \,. \]
        Therefore $d_j\leq 0$, implying
        \[ a_j'\leq a_j'-d_j \leq \max(a_j'-d_j,g_j) \,. \]
        On the other hand, if $a_j-d_j<g_j$, then
        \[ a_j'\leq a_j \leq \max(a_j-d_j,g_j)=g_j\leq \max(a_j'-d_j,g_j)\,. \]
\qed
    \end{proof}

\end{document}